

\documentclass[preprint,3p,lefttitle]{elsarticle}

\usepackage{amssymb}
\usepackage{latexsym}

\usepackage{url}
\usepackage{xcolor}
\definecolor{newcolor}{rgb}{.8,.349,.1}


\usepackage{soul}  
\usepackage{xcolor}  
\newcommand{\bl}[1]{\textcolor{black}{#1}}  
\newcommand{\red}[1]{\textcolor{black}{#1}}

\usepackage{amsmath}
\usepackage{listings}
\usepackage{algorithmic}
\usepackage{textgreek}
\hyphenation{op-tical net-works semi-conduc-tor}
\usepackage{amsthm}
\usepackage{amssymb}
\usepackage{mdframed}
\usepackage[ruled,vlined]{algorithm2e}[H]
\usepackage[bookmarks=false]{hyperref}
\theoremstyle{definition}
\newcommand{\comment}[1]{}

\newtheorem{theorem}{Theorem}

\newtheorem{proof1}{Proof}
\newtheorem{proposition}{Proposition}

\usepackage{mdframed}

\newmdtheoremenv{problem_stmt}{Problem}

\DeclareMathOperator*{\argminA}{arg\,min}
\setcounter{MaxMatrixCols}{20}

\journal{Ad Hoc Networks}

\begin{document}

\begin{frontmatter}


\title{Secure and Ultra-Reliable Provenance Recovery in Sparse Networks: Strategies and Performance Bounds}

\author[1]{Suraj Sajeev}
\author[2]{Manish Bansal}
\author[1]{Sriraam S V}
 \author[2]{Harshan Jagadeesh}
\author[1]{Huzur Saran}
\author[3]{Yih-Chun Hu}

\address[1]{Department of Computer Science and Engineering, Indian Institute of Technology Delhi, India}
\address[2]{Bharti School of Telecommunication Technology and Management, Indian Institute of Technology Delhi, India}

\address[3]{Department of Electrical and Computer Engineering, University of Illinois Urbana-Champaign, U.S.A}

\begin{abstract}
Provenance embedding algorithms are well known for tracking the footprints of information flow in wireless networks. Recently, low-latency provenance embedding algorithms have received traction in vehicular networks owing to strict deadlines on the delivery of packets. While existing low-latency provenance embedding methods focus on reducing the packet delay, they assume a complete graph on the underlying topology due to the mobility of the participating nodes. We identify that the complete graph assumption leads to sub-optimal performance in provenance recovery, especially when the vehicular network is sparse, which is usually observed outside peak-hour traffic conditions. As a result, we propose a two-part approach to design provenance embedding algorithms for sparse vehicular networks. In the first part, we propose secure and practical topology-learning strategies, whereas in the second part, we design provenance embedding algorithms that guarantee ultra-reliability by incorporating the topology knowledge at the destination during the provenance recovery process. Besides the novel idea of using topology knowledge for provenance recovery, a distinguishing feature for achieving ultra-reliability is the use of hash-chains in the packet, which trade communication-overhead of the packet with the complexity-overhead at the destination. We derive tight upper bounds on the performance of our strategies, and show that the derived bounds, when optimized with appropriate constraints, deliver design parameters that outperform existing methods. Finally, we also implement our ideas on OMNeT++ based simulation environment to show that their latency benefits indeed make them suitable for vehicular network applications.
\end{abstract}



\begin{keyword}


 Ultra-Reliable Provenance\sep Bloom Filter\sep Multi-Hop Network\sep Security \sep Edge Embedding \sep Double-Edge Embedding
\end{keyword}

\end{frontmatter}


\section{Introduction}
\label{sec:introduction}

With the rapid increase in use-cases for wireless communication, deployment of wireless networks for enterprise and public infrastructures are in demand. Example applications include sensor networks for acquiring real-time spatio-temporal data \cite{SPC1}, vehicular networks for facilitating urban transportation \cite{VSN}, and Industrial IoT (IIoT) for improving efficiency in industrial automation \cite{INDA}. One of the key features envisaged by these networks is the self-organizing capability wherein the wireless nodes of the network can communicate among each other or with a control center. 

Although the self-organizing capability in wireless networks provides the much needed scalability feature, it also exposes the network to various security threats on the trustworthiness of the data flowing through the network. While non-critical applications such as temperature sensing using sensor networks can address security threats through the use of long-term checks in an intermittent manner, vehicular networks that deploy such wireless networks will have to handle security threats instantaneously on a packet-to-packet basis. This is because any packet in the network is vulnerable to security threats, which in turn may lead to catastrophic consequences \cite{SVN}. A standard way to detect security threats is to use a fixed portion of the packet to carry provenance information of the packet from its origin to the destination \cite{SURV}. This way, upon receiving the packet, the control center will recover the identity of the processes that modified the packet, and then detect any security breach that might have occurred on the packet. Formally, \textit{provenance} \cite{SPS},\cite{MP} refers to the information on the origin of the data, the process that has modified the data and various nodes that have forwarded the data in the network. However, in the context of this work, provenance does not just refer to the information on the set of nodes that forwarded the packet; it also captures the order of nodes (i.e., the path) through which the packets reach the destination. As a potential use-case, provenance serves as meta-data by assisting the destination in verifying the authenticity of the packet thereby forbidding an adversary in either illegitimately modifying the data or executing a denial of service attack. For instance, in applications such as multi-hop distance bounding algorithms \cite{dist_bound1}, \cite{dist_bound2}, the destination might be interested in learning the number of hops through which the packets reach the destination. In such scenarios asking the intermediate nodes to only embed their identities, i.e., only the information on the set of nodes, in the packet may lead to denial-of-service threats as a node might not embed its signature. Consequently, the destination estimates incorrect bounds on the distance between the source node and itself. However, when the destination ask the nodes to embed signatures in the packet based on their edges with their neighbours (i.e., by capturing the order of nodes), such denial-of-service threats can be detected at the destination since a path is not formed in the provenance recovery process. This is one use-case where provenance is potentially applicable in vehicular networks. As another use-case, provenance is helpful in multi-hop communication wherein a number of nodes in the network are supposed to modify the data of the packet en-route to the destination such as statistics addition, state change notifications etc. In such a scenario, the knowledge on the order in which the nodes participated in the update process will help the destination to detect integrity threats on the packets.

While provenance recovery in wireless networks is imperative to detect and mitigate security threats, it is well known that such benefits come at the cost of additional communication-overhead as well as additional delay-overhead on the packets. With the emergence of new use-cases of wireless networks in time-critical vehicular networks,  it has been recently shown that legacy provenance recovery methods cannot be applied to vehicular networks in an off-the-shelf manner \cite{BFP}. This is because the delay-overheads offered by the legacy techniques may not meet the latency constraints on vehicular networks. As a result, it has been shown that new provenance recovery methods must be designed from first principles to cater to the mobility and low-latency requirements of vehicular networks. In particular, \cite{BFP} proposed a provenance recovery framework for vehicular networks to reduce the delays on the packets when the intermediate relay nodes modify the packets on their way to the destination. Although \cite{BFP} addressed the constraint of delay, some of the other key features, such as the network's density, have been omitted. For instance, it is well known that the density pattern of a vehicular network within a Road Side Unit (RSU) depends on several factors, such as the coverage range of RSU, and also the real-time traffic pattern, which in turn depends on the time of the day as well as the city/town wherein the vehicular network is deployed \cite{rsparse}. Inspired by these practical scenarios, we explore the problem of designing secure provenance recovery algorithms by capitalizing on the knowledge of the density pattern of the underlying vehicular network. 



\begin{table*}
\caption{\label{table:summary}Summary of contributions. As depicted in Fig. \ref{fig:comparison}, the regions B1, B2, B3, B4, B5 denote the specific areas where prior works exist in topology learning algorithms. Similarly, the regions A1, A2, A3, $\ldots$, A7 denote the specific areas where prior works exist in provenance recovery algorithms. The region A7 captures those contributions that neither address latency constraint, nor complexity constraints, nor known topology.}
\vspace{0.5cm}
\centering
\resizebox{16cm}{!}{
\begin{tabular}{|c|c|c|c|}
\hline
\textbf{Main contributions} & \textbf{Specific} & \textbf{Salient features} & \textbf{Existing contributions}\\
 & \textbf{contributions} & &\\
\hline
Topology learning & SSMP & Ultra-reliability and high communication-overhead. & \cite{BFP} $\in$ A7 assumed complete graph\\
protocols & & Derived performance bounds. & \cite{TOP1} $\in$ B1 limited to learning a \\
& & Multi-variable optimization problem. & subset of the topology.\\
 & &  Optimization challenging to solve. & \cite{TOP2} $\in$ B1 applicable to linear networks,\\
 & & & high communication-overhead\\
  & MSSP & Low reliability and low communication overhead. & \cite{TOP3} $\in$ B4, \cite{TOP4} $\in$ B3 need dedicated devices:\\
  & & Derived performance bounds. & \cite{TOP3} uses monitors, \cite{TOP4} uses UAVs\\
  & & Single-variable optimization problem. & to learn physical topology.\\
 &  & Optimization straightforward to solve. &   \cite{TOP5} $\in$ B1 designed for \\
  & &  & non-adversarial networks\\
\hline
 Provenance recovery  & DE & Uses hash-chain & \cite{BFP} did not use hash-chain\\
 methods &  & to trade-off communication-overhead & \cite{BFP} assumed complete graph\\
 & & with complexity-overhead & We outperform \cite{BFP}.\\
  & & Derived performance bounds with topology. & Our proofs are more elegant \\
   & & & than \cite{BFP}.\\
& DDE & Uses hash-chain & \cite{BFP} did not use hash-chain.\\
 &  & to trade-off communication-overhead & \cite{BFP} assumed complete graph.\\
 & & with complexity-overhead & We outperform \cite{BFP}\\
  & & Derived performance bounds with topology. & Our proofs are more elegant \\
  & & & than \cite{BFP}.\\
\hline
 OMNeT++ based & & SSMP provides low end-to-end delay & Prior works have \\
simulations & & although it has high communication-overhead. & not implemented topology \\
& & Topology knowledge reduces delay & learning and \\
& & on provenance recovery. & provenance recovery jointly.\\
\hline
\end{tabular}}
\end{table*}

\begin{figure}
\begin{center}
\includegraphics[trim={0 0 0 0},clip,scale = 0.45]{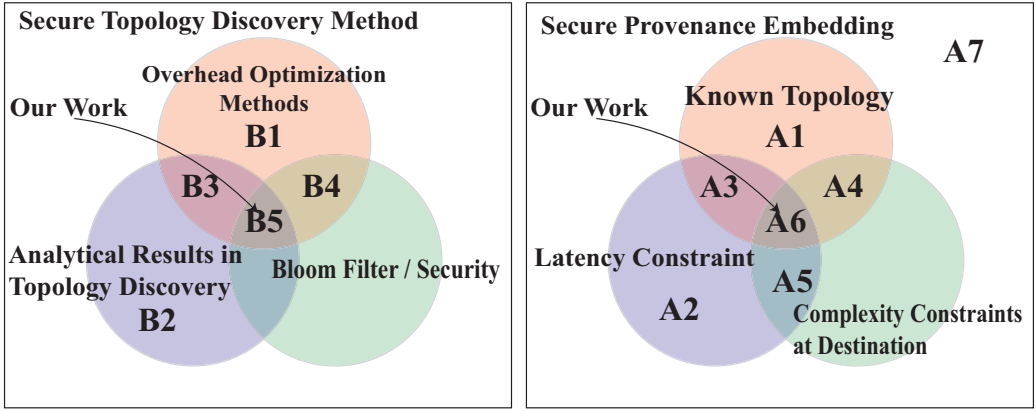}
\end{center}
\caption{An overview of our contributions w.r.t. existing contributions. \cite{TOP1}, \cite{TOP2}, \cite{TOP5} $\in$ B1; \cite{B31} $\in$ B2; \cite{TOP4}, \cite{TOP6} $\in$ B3; \cite{TOP3} $\in$ B4; \cite{SP2} $\in$ A1; \cite{PML}, \cite{LDP} $\in$ A2;  \cite{LSS} $\in$ A3; \cite{PDA} $\in$ A4; \cite{LWSA}, \cite{PAP} $\in$ A5; \cite{BFP}, \cite{SPS}, \cite{MP}, \cite{SP1} $\in$ A7.}
\label{fig:comparison}
\end{figure} 


\subsection{Motivation and Contributions}

Provenance recovery with applications to vehicular networks throws unique challenges due to the mobility of the underlying nodes. To handle this mobility constraint, existing methods for provenance recovery have taken up a conservative approach of assuming a complete graph \cite{WAXM} on the underlying topology, i.e., the assumption that every node is potentially connected to every other node. As a significant departure from this approach, we point out that the underlying network topology of vehicles may be sparse, i.e., the total number of edges can be far fewer than that of a complete graph. In particular, \cite{rsparse}, \cite{RSUdeploy} (and references within) have shown that sparse vehicular networks are observed in real-time traffic reports, especially outside peak-hour traffic, and moreover, a wide number of networking protocols have been proposed to particularly cater to sparse vehicular networks. As a result of these developments in sparse vehicular networks, we point out that the parameters of the provenance recovery methods that are chosen based on the assumption of a complete graph do not necessarily match the topology, thereby leading to sub-optimal performance. Motivated by this observation, we introduce a new provenance framework wherein we first learn the topology of the network and then exploit its knowledge, i.e., the sparsity of the underlying graph, to design provenance embedding algorithms. Subsequently, we address various objectives such as reducing the error rates in provenance recovery, handle latency constraints (required for vehicular networks), and also tackle various adversarial attacks. The contributions of our work, which can be divided into two phases, namely: (i) the topology learning phase and (ii) the payload phase, are discussed below.

1) For the topology learning phase, we propose two Bloom filter based algorithms, namely: (i) Single-Source Multi-Packet (SSMP) embedding, and (ii) Multi-Source Single-Packet (MSSP) embedding (see Section \ref{sec:Toplogy Learning Protocols}). The two schemes are such that each node uses pre-shared signatures with the destination to embed the information on its neighbours into the Bloom filter, and then the destination, after recovering the Bloom filter portions from the packet(s), constructs the topology. A salient feature of our approach is that an edge is considered a member of the topology only if both its vertices embed the edge in the Bloom filter. For both the learning schemes, we study the behaviour of false-positive events, wherein non-existing edges are also retrieved as valid edges of the topology owing to the use of Bloom filters. To choose the right set of parameters for our algorithms, we derive closed-form expression on the false-positive rates (FPRs) and their upper-bounds as a function of Bloom filter parameters and the information on the number of neighbours of each node (see Theorem \ref{thm:SSMP_exact}, Theorem \ref{thm:SSMP_bound} and Theorem \ref{thm:MSSP}). Through extensive simulation results, we show that our performance bounds help us for parameter estimation in practice. Other than studying the FPRs of the proposed methods, we also study various aspects such as transmission overhead, resilience to threats, and ease of optimization for parameter estimation. Unlike the existing contributions in this space, as depicted on the left-side of Fig. \ref{fig:comparison}, our work addresses joint optimization of Bloom filter sizes of all the nodes, and also presents analytical results that assist in parameter estimation for the discovery algorithms. As listed in Table \ref{table:summary}, existing contributions have only focused on routing overheads in non-adversarial environments without incorporating the packet size for optimization. 
   
2) In the payload phase, we design low-latency provenance embedding and provenance recovery algorithms that use the topology knowledge gathered during the topology learning phase (see Section \ref{sec:payload}). We propose hash-chain assisted variants of (i) deterministic edge embedding (DE), and (ii) deterministic double-edge embedding (DDE) methods that were recently proposed to handle the mobility and low-latency feature of vehicular network \cite{BFP}. Although the DE and DDE methods were proposed to handle unknown topology in \cite{BFP}, we observe that these methods are still applicable with the topology knowledge as they help to resolve multiple paths in topologies containing cycles. Furthermore, the DDE method continues to assist low-latency communication by asking the intermediate nodes to skip the embedding process thereby reducing the end-to-end delay compared to the DE method. As the main contribution of this part of our work, we use hash-chains to entangle the edge identities (or the double-edge identities) of successive nodes so that the destination first generates a list of candidate paths from the received Bloom filter, and then resolves the exact path by recomputing the hash-chains of all the candidate paths. This way, we study the trade-off between the communication overhead captured through the Bloom filter size in the packet and the complexity-overhead captured through the affordable number of hash-chain computations at the destination. In particular, we consider hash-chain assisted DE and DDE methods wherein a Bloom filter of size $m$ bits is used in the packet to carry the provenance information, and the destination is allotted a complexity of verifying up to $\beta$ candidate paths, for some $\beta > 1$, to recover the provenance. For these settings, assuming perfect topology knowledge at the destination, we analyze the FPRs of the proposed provenance recovery methods in sparse vehicular networks, and then derive upper bounds on them as a function of $m$ and $\beta$ (see Theorem \ref{thm_DEE_PFA}). We show that the derived performance bounds help us to choose the right values of $m$ and $\beta$ for a given FPR. We also present extensive simulation results to show that the parameters obtained using the performance bounds are approximately the same as those derived using simulation results. Overall, when compared to \cite{BFP}, we show that topology knowledge significantly improves the accuracy of provenance recovery, especially when the topology is sparse. As depicted in Fig. \ref{fig:comparison}, this is the first work that addresses low-latency provenance recovery with topology knowledge for various complexity constraints at the destination.

3) To demonstrate the impact of our framework, we implement the proposed topology learning phase and the provenance recovery phase on OMNeT++ environment. Based on these experiments, we observe that the SSMP algorithm achieves substantial end-to-end delay reduction compared to the MSSP algorithm, and this behaviour is attributed to the fact that the edge verification procedure at the RSU can be initiated as when packets from individual nodes arrive, while the packets from other nodes are en-route to the RSU. In contrast, the MSSP algorithm requires the RSU to initiate the edge verification procedure only after the packet arrives after traversing all the nodes in the network. During the packet routing process of the SSMP algorithm, packets from different nodes can be routed simultaneously as long as queuing is appropriately handled at each node. In contrast, in the MSSP algorithm, this parallelism is not possible since every node has to wait to receive the packet sequentially, and then embed its neighbours in the Bloom filter. Based on the simulation results on the payload phase, we observe that with the topology knowledge for the provenance recovery method, the end-to-end delay achieved in the provenance recovery step at the RSU is significantly lower compared to that when complete graph is used. Although this result is intuitive, our simulation results on OMNeT++ makes this observation explicit, thereby driving home the point that the knowledge of the topology reduces the latency in the provenance recovery process in addition to the delay benefits of the DDE method during packet routing.

\begin{figure}
\begin{center}
\includegraphics[trim={0 4cm 0cm 0cm},clip,scale = 0.4]{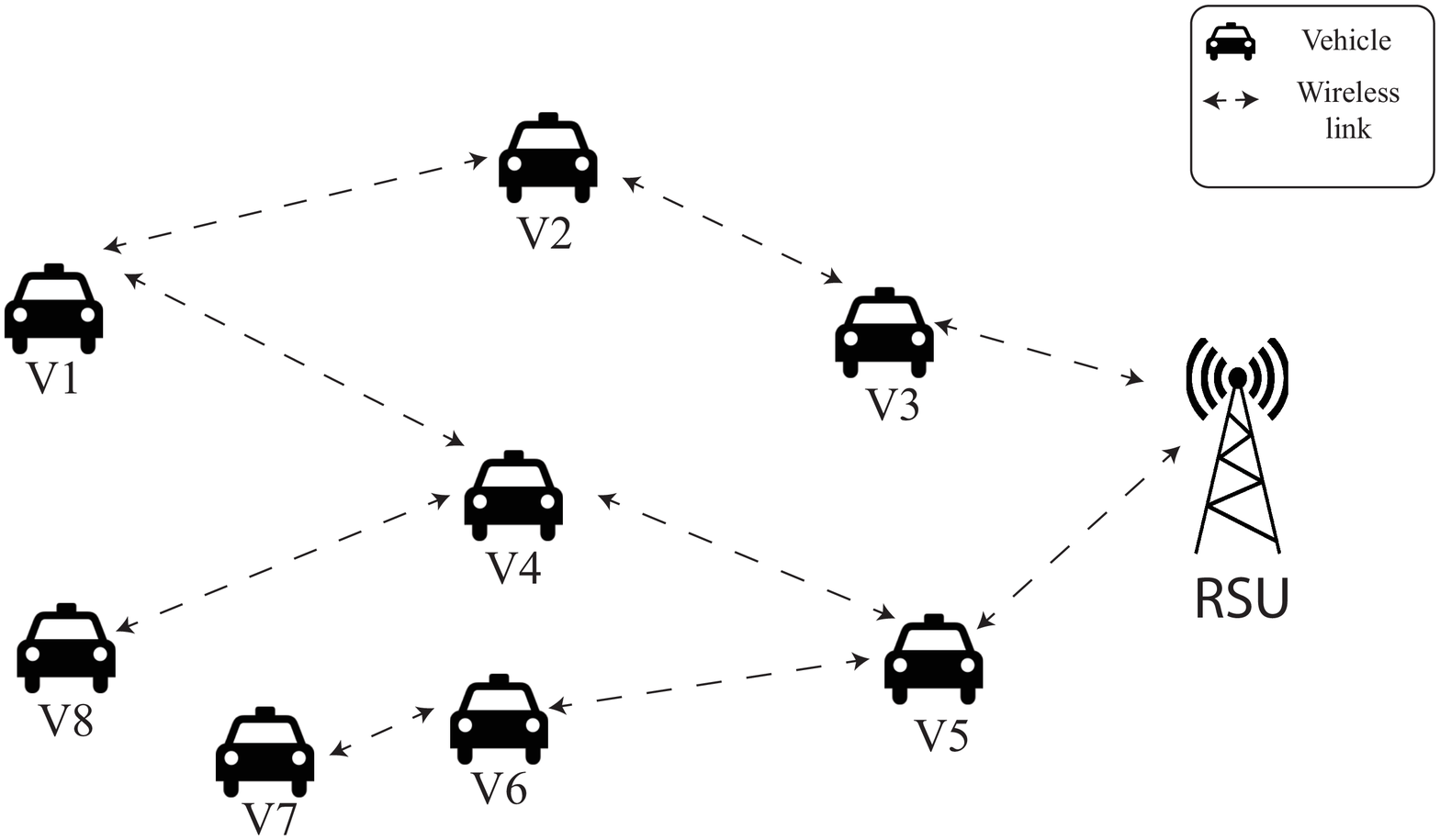}
\end{center}
\caption{Figure depicting the logical topology of the vehicular network, when two vehicles are not connected, it means they are out of coverage area of each other.  With no prior knowledge on the network topology, the destination must first learn the topology of the network, and then determine the path traced by the packets in the payload phase.}
\label{fig:network}
\end{figure}

\subsection{Related Work on Network Provenance using Bloom Filters}
\label{sec:bf_basics}

We provide a brief background on Bloom filters, and review existing contributions that have applied Bloom filters for network provenance applications. Bloom filters (BF) are probabilistic data structures, which are used to check the membership of an element belonging to a set $\mathcal{S}$ in $O(1)$ (constant amount) time. Specifically, a Bloom filter, denoted by $\mathbf{BF} = [BF[0], BF[1], \ldots, BF[m-1]],$ is an array of $m$ bits such that $BF[i] \in \{0, 1\}$, for $0 \leq i \leq m-1$. Every element of the set $\mathcal{S}$ that wishes to register its membership in the Bloom filter uses an attribute that is unique to itself to choose $k$ random positions in the Bloom filter, for $0 \leq k \leq m - 1$, and then sets the values at those positions to one irrespective of the existing values of $\mathbf{BF}$. One way to achieve this task is to use $k$ hash functions that return $k$ statistically independent numbers in the range $[0, m-1]$. While the elements of $\mathcal{S}$ register their membership this way, a trusted third-party can check the memberships of the elements of $\mathcal{S}$ in a similar way as long as the unique attributes used by the elements are apriori shared with it. In particular, to verify the membership of an element, say $s \in \mathcal{S}$, the third-party computes the $k$ indices of the Bloom filter using the unique attributes of $s$, and then checks whether those $k$ positions are already set in the Bloom filter. If all the $k$ positions are set, then the membership of $s$ is said to be verified. On the other hand, if at least one position is not set, then its membership is not verified. Note that the idea of verifying whether the $k$ locations are set or not, is only a necessary condition but not a sufficient condition. This is because of the possibility of \textit{hash collision} wherein more than one element of the set may use the same set of indices to mark their membership. As a result, even though an element has not embedded its membership in the Bloom filter, its membership may be verified by the third-party. Henceforth, throughout this paper, we refer to such an event as the \textit{false-positive} event. This implies that when using Bloom filters, the choice of $k$ and $m$ must be based on the cardinality of $\mathcal{S}$ such that the occurrence of false-positive events is minimized.

With respect to applications of Bloom filters for provenance recovery, Shebaro et al. in \cite{LDP} have proposed a light-weight secure provenance scheme using Bloom filters that is capable of finding node participation while transmitting a packet. It has been shown that the scheme can detect malicious behaviour such as packet dropping attack. Furthermore, Sultana et al. \cite{LSS} have also addressed a similar problem with the additional capability of detecting packet forgery attacks. Recently, Klonowski et al. have also proposed a light-weight data protocol based on Bloom filters \cite{LWSA} and have analyzed the leakage of information to an adversary that has partial knowledge of the Bloom filter. Although the methods in \cite{LDP}, \cite{LSS}, \cite{LWSA} are effective on static topologies, it is observed that these methods are not applicable in the context of vehicular networks wherein the topology may vary and is unknown at the destination. To specifically handle the framework of vehicular networks, Harshan et al. \cite{BFP} have recently proposed provenance recovery mechanisms, referred to as the edge embedding and the double-edge embedding techniques, and have shown that their techniques offer reduced FPRs in provenance recovery and also reduce latency on the packets. In order to handle mobility of the nodes, \cite{BFP} assumed a complete graph on the underlying topology when designing the Bloom filter parameters and also when implementing the recovery process. As a result, the performance reported in \cite{BFP} is not optimal, especially when the topology is sparse. Pointing at this limitation, we have proposed a two-fold approach of first learning the topology and then using its knowledge in the provenance recovery process. A thorough comparison between our contributions and \cite{BFP} is listed in Table \ref{table:summary}. For literature review on provenance recovery methods that do not use Bloom filters, we refer the reader to \cite{SPS}, \cite{MP}, \cite{PML}, \cite{PDA}, \cite{PAP}, \cite{IPT}, \cite{CPT}, \cite{CTR}, \cite{ntemp}, \cite{paradise}, \cite{step}.

Henceforth, throughout the paper, we refer to a wireless network as a vehicular network if the involved nodes are vehicles. Furthermore, we refer to a vehicular network as a sparse vehicular network if the adjacency matrix of the corresponding graph is sparse. We either use the phrase \emph{vehicular network} or \emph{sparse vehicular network} depending on the context in the rest of the paper.

\section{Network Model on Provenance Recovery}
\label{sec:Network Model}
 
We consider a wireless network comprising $n$ nodes out of which $n-1$ of them are mobile nodes and one of them is the destination. In the context of vehicular networks, the $n-1$ nodes could represent vehicles, whereas the destination could represent a road side unit (RSU). We assume that the set of $n$ nodes can be modelled as an undirected graph $G(N,E)$ where $N=\{1, 2,3, \ldots, n\}$ is the set of nodes such that node $n$ is the destination by default, and $E \subset \{(i, j)~|~ i, j \in N, i \neq j\}$ is the set of edges present in the graph. The existence of an edge, denoted by $(i, j) \in E$ indicates that node $i$ can communicate with node $j$ directly. Since the nodes are mobile and have limited power constraints for transmission, we assume that the underlying topology of the network $G(N,E)$ can vary over time; however, remains fixed for $T$ time units referred to as coherence time of the network. For instance, if the transmission range of a stationary radio device is $1$ Km, then a receiver radio, travelling away from the stationary radio with a velocity of 50 Kmph, can remain in connectivity for about 72 seconds. This in turn implies that the two nodes can communicate roughly 144000 packets under the assumption that each packet is of 0.5 milliseconds (equal to that of a slot as per LTE specifications \cite{standard}). Along similar lines, the maximum of all the pair-wise relative velocities between the vehicles under the RSU will determine the minimum time-duration over which any two nodes will remain in connectivity, and this in turn gives us the coherence time of the network. Henceforth, throughout the paper, we refer the destination and the RSU, interchangeably, and similarly, refer topology and graph, interchangeably.

One of the nodes in the network is a source node that intends to deliver packets to the destination with the help of a subset of other nodes in a multi-hop manner. We assume that a standard routing algorithm such as Ad-hoc On-Demand Distance Vector Routing (AODV) \cite{AODV} is used in the network. During the transmission of a packet from a source node, denoted by node $s$, for $s \in N$, suppose that the packet traverses through the nodes $i_{1},i_{2}, \ldots, i_{h-1}$, for $i_{j} \in N \backslash \{n\}$, before reaching the destination. Since the path chosen by the nodes is ad hoc, the destination does not know the path. As a result, other than routing the packet, the relaying nodes must also assist the destination in learning the path taken by the packet. To achieve this task, each node, when forwarding the packet, also adds its signature on the packet so that the destination, upon receiving the packet, can verify the path traced by the packet using the pre-shared signatures of all the nodes. Henceforth, throughout this paper, \emph{provenance} refers to the information on the path travelled by the packet, i.e., the ordered sequence of nodes that forwarded the packet, \emph{provenance embedding} refers to the process with which every node adds its signature in the packet, and \emph{provenance recovery} refers to the process with which provenance information is recovered at the destination using the signatures of all the nodes.

\subsection{Threat Model}
 
To malign the provenance recovery process, we assume that the network also includes adversaries that execute various security threats such as (i) \textit{Eavesdropping}, wherein an external adversary (an entity outside the network of $n$ nodes) wishes to listen to the transmissions and recover the information on the topology and the provenance, (ii) \textit{Edge-Insertion attack}, wherein an insider (one of the nodes in the network) intends to modify the information in the topology-learning phase such that the destination convincingly learns a non-existing edge in the topology. This way, in the payload phase the insider node can ensure that the packets are flooded through the non-existing edge in the form of a wormhole attack \cite{WHE} without getting detected by the destination. In order to secure the topology learning process and the provenance recovery process against these threats, each node uses a unique secret key (which is already pre-shared with the core network) that is stored in the root of the vehicle akin to the subscriber identity module (SIM) based secret-keys in cellular networks. Furthermore, this root key is used to derive new keys between the node and the core network which in turn would be used \red{to} implement a secure topology learning process and the provenance recovery process. Towards that direction, we first explain a secure neighbour discovery process among the nodes in the following section. 

\subsection{Secure Neighbour Discovery}
 
To design secure provenance embedding algorithms that are suited to the constraints of vehicular networks, we assume that the network includes a number of gateways that are connected to RSUs through a secure backhaul network. We assume that every node (in this case, a vehicle) that enters the network, authenticates itself with a gateway node using a pre-shared root key that is already known to the core network. In this context, the authentication mechanism between a vehicle and the core network could be implemented using a challenge-response strategy. Subsequently, each vehicle derives new keys with the core network by using a standard set of \red{key derivation functions (KDF)} on the root key. Henceforth, we refer to the new set of keys as derived secret keys. Subsequently, every node authenticates with its neighbouring nodes (those nodes that are in its coverage area) using a public-key crypto-system based authentication protocol \cite[Section 2]{BFP}. In particular, when node $i$, for $1 \leq i \leq n - 1$, authenticates with a gateway, it shares its public key \red{ $K_{i_{pub}}$ } along with its unique identity without disclosing the private key \red {$K_{i_{pvt}}$} counterpart of the public key cryptography. Subsequently, this pair (i.e., the public key and the unique identity) of node $i$ is distributed with other nodes by the gateway, either when they authenticate with the gateway, or by broadcasting this information to all the nodes at regular intervals. As a result, when node $i$ wants to authenticate node $j$, it asks node $j$ to share its unique identity by encrypting it using its private key \red{$K_{j_{pvt}}$}. Since node $i$ has the corresponding public key \red{ $K_{j_{pub}}$}, it can decrypt the cipher and then verify the identity of node $j$, thereby completing the authentication procedure. Although vehicular network models (as described in \cite[Section 2]{BFP}) may include multiple RSUs connected to gateway nodes through a core network, we only focus on communication with one RSU in order to explain the intricacies of the provenance embedding algorithms. 


\begin{figure}
\begin{center}
\includegraphics[trim={0 0 0 2cm},clip,scale=0.45]{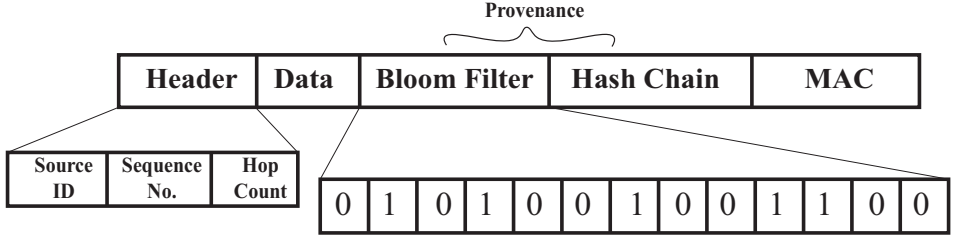}
\end{center}
\caption{A depiction of the packet structure that incorporates a Bloom filter and a hash-chain to assist provenance recovery.}
\label{fig:packet structure}
\end{figure}

\subsection{Objectives}

Based on the description of the network model, it is clear that $G(N, E)$ depends on the mobility pattern and the coverage area of each node. As a result, for a given coherence time of the network, $G(N, E)$ may either be sparse or dense. While the graph's density may dictate how the packets are routed to the destination (because of the ad hoc routing protocol), the destination may not know $G(N, E)$ at the time of provenance recovery. Consequently, the destination may not know the exact set of nodes or the edges to verify, and this in turn may degrade the accuracy of provenance recovery process, especially when $G(N, E)$ is sparse. Identifying this critical relation, we apply Bloom filters for the two-fold purpose of the topology learning phase as well as the provenance embedding/recovery phase. In the former phase, Bloom filters are appropriately used by the nodes to help the destination construct the adjacency matrix of the network, whereas in the latter phase, Bloom filters are used by the nodes to embed the provenance information with special constraints suited to vehicular networks. One of the main assumptions to take this two-fold approach is that the coherence time of the network is sufficiently large in order to accommodate the topology learning phase (for the first fraction of the coherence time) and the payload phase (for the residual fraction). While the use of Bloom filters is fixed throughout the paper, the specific signature used by the nodes to modify the Bloom filters changes as a function of the phase. For instance, the attribute could be a unique identifier associated with a node in the network, an edge in the network, or a double-edge in the network. 

To support the process of provenance embedding and provenance recovery, the packet structure contains three distinct parts, namely: the header, the data and the provenance, as shown in Fig. \ref{fig:packet structure}. The header contains (i) Source ID, (ii) a unique sequence number given to the packet, and (iii) a hop counter which tracks the number of hops covered by the packet. To mitigate security threats, the header part can also be protected by adding a message authentication code using a pre-shared key between the source and the destination. Furthermore, the data portion contains the payload information of the source, and finally, the provenance portion contains a Bloom filter along with an additional space to incorporate hash-chains to improve the reliability of the provenance recovery process. Table \ref{tab:notation} enlists the notations used throughout the paper.

\subsection{Relevance of Bloom filters in Sparse Vehicular Networks}
\begin{table} 
   \centering
   \caption{\label{tab:notation} Notations used in this paper}
   \begin{tabular}{|c|c|}
     \hline
     \textbf{Term} & \textbf{Meaning} \\
     \hline
     $n$ & Number of nodes in the network \\
     \hline
     $\mathbf{BF}$ & Bloom filter array \\
     \hline
     $m$ & Bloom filter size (in bits) \\
     \hline
     $k$ & Number of hash function used \\
     \hline
     $h$ & Number of hops travelled by the packet \\
         \hline
         $seq$ & Sequence number of packet \\
         \hline
         $n_i$ & ID of $i^{th}$ node \\
         \hline
         $\gamma_i$ & Number of neighbours of $i^{th}$ node \\
         \hline
         
         $(i,j)$ & Edge connecting node $n_i$ to $n_j$ \\
         \hline
         $EID_{(i,j)}$ & ID of Directed Edge from $n_i$ to $n_j$\\
         \hline
         $DEID_{(i,j, k)}$ & ID of Directed double-edge $n_i$ $\rightarrow$ $n_j$ $\rightarrow$ $n_{k}$\\
         \hline
         $N$ & Set of all nodes in network\\
         \hline
         $E$ & Set of all edges in network\\
         \hline
         $G(N,E)$ & Graph representing the topology\\
         \hline
         $KDF$ & \red{Key Derivation Function } \\
         \hline
         $K_{i_{root}}$ & \red{Pre-shared secret key between $n_i$ and destination} \\
         \hline
         $K_{i}$ & \red{Derived secret key (KDF($K_{i_{root}}$))}\\ & \red{Used for embedding/recovering the EID/DEID in/from Bloom filter} \\
         \hline
         $K_{i_{pvt}}$, $K_{i_{pub}}$ & \red{Private key  and public key pair used for device-to-device authentication} \\
         \hline
         
     \end{tabular}
 \end{table}

As explained in the objectives, this work makes use of Bloom filters for both the topology learning phase and the payload phase owing to the following benefits: (i) Bloom filters provide a  probabilistic way of embedding the edge identities in the packet without revealing the identities of the edges to the other nodes in (and out of) the network, (ii) The use of Bloom filters forbids the RSU to explicitly share deterministic encoding schemes for embedding the edge identities at regular intervals, and importantly, (iii) Bloom filter helps the RSU to verify the memberships of the edges in the network in O(1) complexity. Despite these advantages, it is clear that the Bloom filter based ideas are applicable as long as the required provenance size in the packet does not exceed the allotted bits in the packet. This observation follows from the fact that the Bloom filter size must increase with the increase in the network size to achieve a given false-positive rate. As a consequence, our framework may not be applicable to sparse vehicular networks when $n$ is large. Therefore, instead of questioning the scalability of our protocols, we believe it is prudent to ask ``What are the various scenarios of vehicular networks when our schemes are applicable?" Towards answering this question, we observe from \cite{rsparse}, \cite{RSUdeploy} that the density pattern of a vehicular network within an RSU depends on two factors: (i) The coverage range of RSUs, depending on whether the network under the RSU is a femtocell or a macrocell, and (ii) Real-time traffic pattern; this depends on the time of the day as well as the city/town wherein the vehicular network is deployed. By broadly classifying the above factors into multiple groups (as shown in Table \ref{scale_table}), namely: short-range and long-range for the coverage area, and very-low traffic, low-traffic and high-traffic for the real-time traffic, we typically have six types of scenarios. Among these, we believe that our schemes are applicable for the following cases: (i) Long-range coverage with very low-traffic. In this context, $n$ is not large and the network can be sparse. (ii) Short-range coverage with low-traffic (also applicable for very low-traffic). However, for long-range coverage with high-traffic, or for short-range coverage with high-traffic, the network may not be sparse, and therefore, the topology learning phase is not required. With the above points, it is clear that our work may not solve the problem for long-range coverage with medium traffic, wherein $n$ can be large and network may be sparse. For such cases,  one may need to think of implementing a)  the proposed schemes with higher density of RSU deployment b)  non Bloom filter based topology learning methods that are space-efficient and yet amenable to low-complexity verification at the RSU.

\begin{table}
\caption{Applicability of our framework in practical scenarios}
\begin{center}
\resizebox{8.5cm}{!}{
\begin{tabular}{|c|c|c|c|}
\hline \textbf{Coverage/Traffic} & \textbf{Very} & \textbf{Low-traffic} & \textbf{High-traffic}\\
& \textbf{low-traffic} & &\\
\hline \textbf{Short range} & applicable & applicable & dense graph \\
& & & (topology learning\\
& & & not required)\\
\hline \textbf{Long range} & applicable & scalability  & dense graph\\
& & problem & (topology learning\\
& & & not required) \\
\hline
\end{tabular}}
\end{center}
\label{scale_table}
\end{table}

\section{Secure Topology-Learning Protocols}  
\label{sec:Toplogy Learning Protocols}

 From the description of the provenance recovery process, it is clear that topology knowledge assists in choosing the right set of parameters for the Bloom filter, especially when the network is sparse. Therefore, the destination has the following tasks: (i) Learn whether the network topology is sparse, (ii) Learn the topology if the hypothesis in (i) is true, and (iii) then use the knowledge of the topology when recovering the provenance. Towards answering the question in (i), the destination can estimate the sparsity of the topology based on the number of neighbours reported by each node. In particular, we propose a framework wherein each node, prior to the topology learning phase, sends a control packet to the destination indicating the number of neighbours connected to it as part of the neighbour discovery process (and not their identities). As a result, with $\gamma_{i}$ denoting the number of neighbours of node $i$, for $1 \leq i \leq n-1$, the destination retrieves the set $\Gamma = \{\gamma_1,\gamma_2, \ldots, \gamma_{n-1}\}$ before the topology recovery phase. From the viewpoint of mitigating integrity threats, $\gamma_{i}$ can be accompanied by a message authentication code using the secret-key between the RSU and node $i$. Subsequently, using this set, the destination then decides whether to learn the topology or not; this is because a dense topology may not significantly reduce the error rates as well as the delay in the provenance recovery process when compared to using the complete graph. Assuming that the topology is sufficiently sparse to learn the topology, we present two topology learning algorithms, namely: (i) Single-Source Multi-Packet (SSMP) embedding, and (ii) Multi-Source Single-Packet (MSSP) embedding. We cover their protocols, derive analytical results on their accuracy, and then discuss their resilience against various threat models. Finally, a thorough comparison of the two algorithms is also provided on various aspects.


\subsection{Single-Source Multi-Packet Embedding}
\label{sec:SSMP}

 In the SSMP technique, each node in the network sends an exclusive packet to the destination in which the Bloom filter portion is embedded with the information of all the neighbours of that node. Upon receiving the packets from all the nodes, the destination verifies the membership of various edges of the complete graph to learn the topology. Since each node uses an exclusive packet to embed its neighbours, the RSU can assign different Bloom filter size for each node since each node may have different number of neighbours. This way, the Bloom filter parameters for the SSMP scheme are $\mathbf{m}=\{m_1,m_2,\ldots, m_{n-1}\}$, which is the list of Bloom filter sizes used by all the nodes, and $\mathbf{k}=\{k_1,k_2,\ldots,k_{n-1}\}$, which is the number of hash functions used by the nodes. It is clear that in order to optimize the accuracy of the SSMP protocol, the Bloom filter parameters $\mathbf{m}$ and $\mathbf{k}$ must be carefully chosen before the learning phase (elaborated in Section \ref{subs:param_sse}). The overall procedure of SSMP embedding is divided into three parts, namely: the embedding process at each node, the packet routing process, and the topology recovery process at the destination.
 
\subsubsection{Embedding and Packet Routing Process}
\label{subsec:SSMP:embedding}
   
Node $i$, for $1 \leq i\leq n-1$ has a unique ID denoted by $n_i$, and a derived secret key (using the pre-shared key with the RSU), denoted by $K_{i}$. By mutual authentication between its neighbour, say node $j$, for $j \neq i$, node $i$ has access to a unique edge ID, referred to as $EID_{(i,j)}$. One way to generate $EID_{(i,j)}$ is to use $n_{i}$ and $n_{j}$, along with key $K_{i}$, to obtain a string $EID_{(i,j)}$ = $Enc_{K_{i}}(n_{i} || n_{j})$, where $Enc$ is an appropriate encryption technique, e.g., AES. Note that this edge ID $EID_{(\cdot, \cdot)}$ is asymmetric in nature i.e., $EID_{(i,j)} \neq EID_{(j,i)}$, where $EID_{(j,i)} = Enc_{K_{j}}(n_{j} || n_{i})$ is the edge ID at node $j$ for the same edge. The set of edge IDs, denoted by $\{EID_{(i, j)}~|~ j \neq i\}$, is also available at the destination since it has access to the IDs and the secret keys of all the nodes. The use of the derived secret key $K_{i}$ ensures that only the RSU is able to learn the topology of the network by verifying the membership of the edges in the Bloom filter. During the embedding process, node $i$, embeds the identity of the edge $(i, j)$ in the Bloom filter as follows: with $seq$ denoting the sequence number of the packet, node $i$ generates a string $EID_{(i,j)}||seq$, where $||$ denotes the concatenation operation. Subsequently, using the string and a set of pseudo-random nonce values, node $i$ uses a hash function to generate $k_{i}$ indices on the Bloom filter of length $m_{i}$, and then sets those indices to one. This process is repeated for embedding every edge with its neighbour. A detailed description of the embedding process is also depicted in Fig. \ref{fig:Single_source_embedding2}.
  
After the neighbour embedding process, each node routes its packet to the destination in an ad hoc manner with minimal overhead in terms of the number of hops. Note that only one node embeds the packet while the others facilitate to forward the packet to the destination. In this context, AODV \cite{AODV} can be used. 

\subsubsection{Topology Recovery Process}
  
In the topology recovery process, the destination constructs the topology of the network using an $n \times n$ adjacency matrix, denoted by $ADJ$, wherein $ADJ[i][j] = 1$ and $ADJ[i][j] = 0$, depending on whether the edge $(i, j)$ exists or not, respectively. When the packet from node $i$ is received, for $1 \leq i \leq n - 1$, the destination extracts the Bloom filter portion from it and then uses the knowledge of the identities of its $n-1$ edges to verify their participation in the Bloom filter. If an edge $(i, j)$ is verified, i.e., if the locations chosen by the edge $(i, j)$ are set to one in the Bloom filter, then $ADJ[i][j]$ is updated to $1$. Since a non-existing edge can also be verified by the destination (due to the probabilistic nature of Bloom filters), we incorporate a reinforcement check in the end wherein the edge $(i, j)$ exists in the topology only when both $ADJ[i][j]$ and $ADJ[j][i]$ are set to one. Even if one of them is zero, such an edge will be discarded from the topology.

\begin{figure}
\begin{center}
 \includegraphics[trim={0 5cm 0 0},clip,scale=0.4]{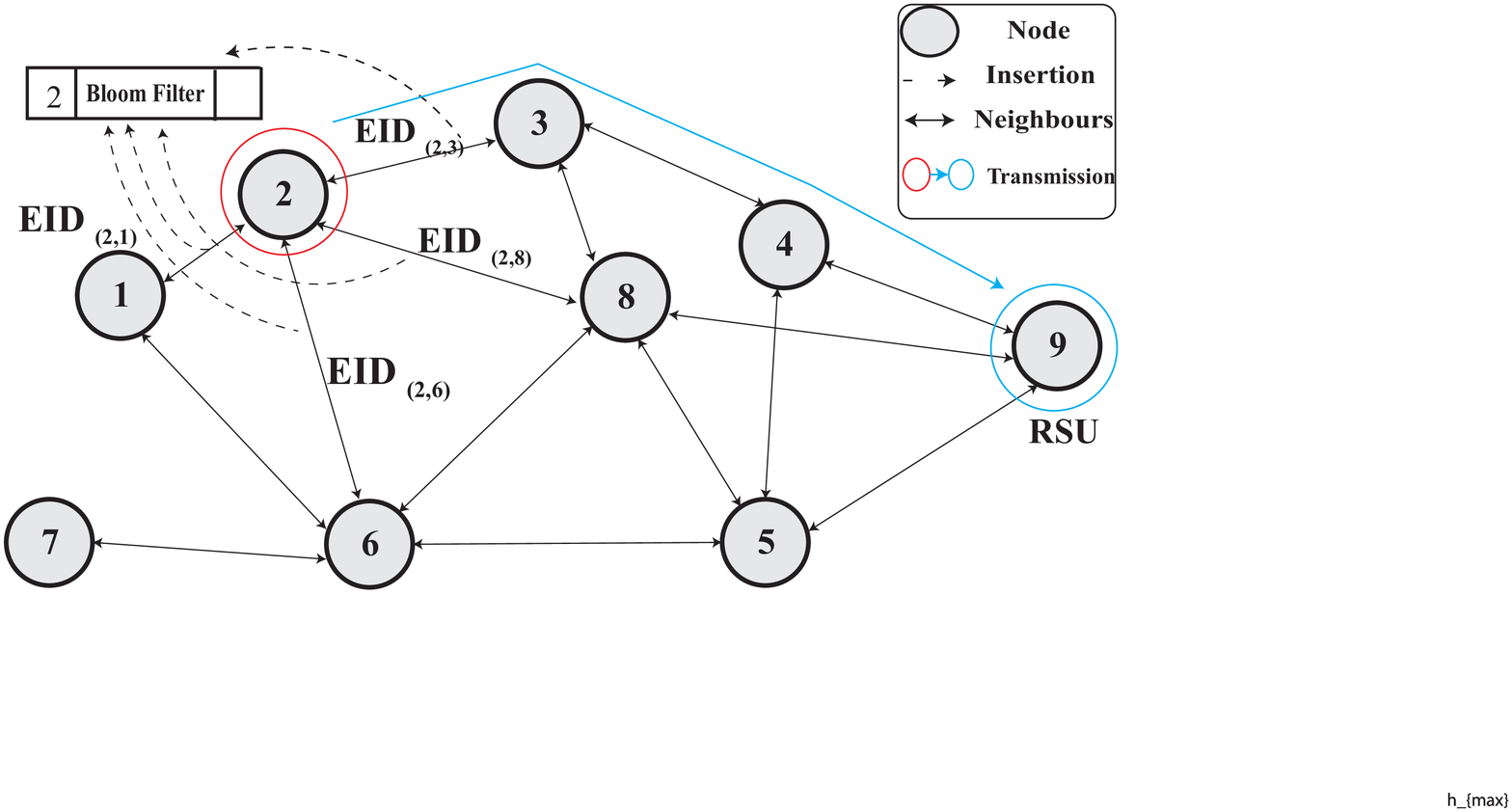}
\end{center}
\caption{SSMP technique where node 2, upon its turn, creates a packet with source $ID = 2$, embeds its neighbours in the form of EID in the Bloom filter, and then transmits it to the RSU.}
       \label{fig:Single_source_embedding2}
   \end{figure}

  \comment{ 
  \begin{algorithm}
\label{al:Topology_learning_using_Single_Source_Embedding}
\SetAlgoLined
\textbf{Input}:{
\\

\textbf{1)} IDs of all nodes i.e. $n_1,n_2,n_3...n_{n-1}$\;
\textbf{2)} Packet from all nodes except from RSU i.e.$P_1,P_2,P_3,P_4...P_{n-1}$ each $n_i$ having Bloom filter $BF_i$ of size $m_i$, and with sequence numbers $seq_1,seq_2,seq_3,...,seq_{n-1}$\;
\textbf{3)} $H_m(.)$ is the hash Function that generate a value between $1$ to $m$;\\
}
\vspace*{1\baselineskip}
\textbf{Output}:{\\
Set of all directed edges present in the topology.\\
}
\vspace*{1\baselineskip}
\textbf{Initialize:}{\\
\textbf{1)} ADJ=$N\times N $ adjacency matrix with all values initialized to 0\;
\textbf{2)} Edges=$\phi$
}
  \vspace*{1\baselineskip}\\
1. Find the connectivity of a node $n_i$ to all other nodes $n_j$ in the topology. \\
\For{i from 1 to n-1}{
\For{j from 1 to n-1}{
\If{$i\neq j$}{
$EID=generateEID(E(i,j),seq_i)$\;
$Hash=\phi$\;
\For{$r$ from $1$ to $k$}{
    $Hash=Hash \cup H_{m_i}(EID||r)$\;
}
\If{Every index of the Hash in $BF_i$ is set to 1}{
$ADJ[i][j]=1$\;
}

}
}
}

2. Reinforce the learning by mutually confirming the neighbours \\
\For{i from 1 to n-1}{
\For{j from i+1 to n-1}{
\If{$ADJ[i][j]\neq ADJ[j][i]$}{
$ADJ[i][j]=0$\;
$ADJ[j][i]=0$\;
}
}}
3. Extract the valid edges from the adjacency matrix
\For{i from 1 to n-1}{
\For{j from 1 to n-1}{
\If{ADJ[i][j]=1}{
$Edges=Edges \cup (i,j)$
}
}}
4. Add all the edges that are terminating in the \st{base station} road side unit.\\
\Return $Edges$
 \caption{Topology learning using SSMP}
\end{algorithm}
}

\subsubsection{False-Positive Analysis}
\label{subs:param_sse}
From the topology recovery process, it is clear that all the edges reported by the nodes will be registered in the adjacency matrix. However, owing to the probabilistic nature of Bloom filters, there exists a non-zero probability with which a non-existing edge is also recovered. Henceforth, we refer to such an event as the false-positive event, denoted by $fp$, and formally define its rate as FPR, given by $P(fp) =\frac{N_{error}}{N_{total}},$ where $N_{total}$ denotes the total number of times the SSMP protocol is executed, and $N_{error}$ is the number of times the destination recovers non-existing edges in the topology. This implies that to learn a given topology, denoted by $G(N,E)$, the Bloom filter parameters $\mathbf{m}=\{m_1,m_2,m_3, \ldots, m_{n-1}\}$ and $\mathbf{k}=\{k_1,k_2,k_3, \ldots, k_{n-1}\}$ must be chosen such that the FPR is minimized. However, we note that running a large number of experiments and simulations to compute the Bloom filter parameters is not computationally feasible. To circumvent this problem, we derive closed-form expression for $P(fp)$, which in turn would be useful to choose the Bloom filter parameters.
    
\begin{theorem}
\label{thm:SSMP_exact}
The closed-form expression of $P(fp)$ for a network topology $G(N,E)$, $\mathbf{m}$ and $\mathbf{k}$ is given as:
$P(fp)= P\biggl(\bigcup_{(a,b) \in \overline{E}} E_{a \leftrightarrow b}\biggr)$, where $E_{a \leftrightarrow b}$ is defined in \eqref{eqn:2} and $\overline{E}$ is the edges in the complementary graph of the topology excluding the edges that are incident to RSU.
\end{theorem}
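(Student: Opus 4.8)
The plan is to convert the frequentist definition $P(fp)=N_{error}/N_{total}$ into the probability of a precisely-defined event over the randomness of the hash functions used by the $n-1$ per-node Bloom filters, and then to decompose that event according to which non-edge is spuriously recovered. First I would fix a single execution of the SSMP protocol and identify when it contributes to $N_{error}$. By definition $N_{error}$ counts the recovery of pairs absent from the true topology, so recovering a genuine edge---which, since Bloom filters admit no false negatives, always happens correctly---is never an error. Hence the destination errs in an execution if and only if it recovers at least one non-edge $(a,b)\in\overline{E}$, which immediately writes the false-positive event as a union of per-non-edge events indexed by $\overline{E}$.

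Next I would characterise, for a fixed non-edge $(a,b)$, the sub-event that the recovery procedure actually outputs $(a,b)$. The reinforcement check in the topology recovery process retains $(a,b)$ only when $ADJ[a][b]=ADJ[b][a]=1$; that is, only when node $a$'s Bloom filter verifies the identity $EID_{(a,b)}$ (all $k_a$ probed positions are set) \emph{and}, simultaneously, node $b$'s Bloom filter verifies $EID_{(b,a)}$ (all $k_b$ probed positions are set). This joint, bidirectional verification of a non-existing edge is exactly the event $E_{a\leftrightarrow b}$ of \eqref{eqn:2}, and since it is symmetric in $a$ and $b$, indexing over the undirected non-edges of the complementary graph counts each candidate exactly once. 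The RSU-incident pairs are omitted because the destination terminates every packet it receives and therefore learns its own direct neighbours deterministically rather than through any membership test, so no hash collision can fabricate such an edge. Substituting this characterisation into the union of the first step gives $P(fp)=P\bigl(\bigcup_{(a,b)\in\overline{E}}E_{a\leftrightarrow b}\bigr)$.

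I expect the main obstacle to be the rigorous ``if and only if'' of the first step: one must argue that the \emph{only} mechanism producing an error is a non-edge surviving the reinforcement check, ruling out both the loss of a genuine edge (excluded by the no-false-negative property) and any spurious retention from a one-sided verification (excluded by reinforcement). A secondary point worth flagging is that the events $E_{a\leftrightarrow b}$ are \emph{not} mutually independent---non-edges sharing a common endpoint probe the same Bloom filter and are coupled through its fill level, which in turn depends on the embedded neighbour counts $\gamma_a$---so the union cannot be replaced by a product. This is precisely why the statement stops at the union, deferring the explicit evaluation of each $E_{a\leftrightarrow b}$ and of the resulting union (via inclusion--exclusion) to the subsequent analysis.
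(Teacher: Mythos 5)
Your derivation of the identity $P(fp)=P\bigl(\bigcup_{(a,b)\in\overline{E}}E_{a\leftrightarrow b}\bigr)$ is correct and follows the same route as the paper's appendix proof: a false positive is exactly the recovery of at least one non-edge, RSU-incident pairs are excluded because the destination learns its own neighbours deterministically, and the reinforcement check makes the per-non-edge event the conjunction of the two one-sided verifications $E_{a\rightarrow b}\wedge E_{b\rightarrow a}$, which is precisely $E_{a\leftrightarrow b}$.

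The substantive divergence is in your closing paragraph, and it cuts against the paper rather than against you. The theorem is billed as a \emph{closed-form} expression, and the paper's proof does not stop at the union: it expands it by inclusion--exclusion, asserts $P\left(e_{1}\cap e_{2}\cap\cdots\cap e_{z}\right)=\prod_{i=1}^{z}P(e_{i})$ for every sub-collection of non-edges, and then evaluates each $P(E_{x\rightarrow y})$ by conditioning on the number of set bits in node $x$'s filter via Stirling numbers. You claim instead that the events are not mutually independent, and on the mathematics you are right: for two non-edges $(x,y)$ and $(x,z)$ sharing the endpoint $x$, the events $E_{x\rightarrow y}$ and $E_{x\rightarrow z}$ are conditionally independent given the fill level of node $x$'s Bloom filter but are both increasing in that (random) fill level, so unconditionally $P(E_{x\rightarrow y}\cap E_{x\rightarrow z})=\sum_{i}(i/m_{x})^{2k_{x}}P(S_{i})\geq\bigl(\sum_{i}(i/m_{x})^{k_{x}}P(S_{i})\bigr)^{2}=P(E_{x\rightarrow y})P(E_{x\rightarrow z})$ by Cauchy--Schwarz, with equality only in degenerate cases. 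Only non-edges with disjoint endpoint sets touch disjoint filters and genuinely factorize. Two consequences: (i) your proof establishes the displayed identity but, to match the full content of the paper's theorem, would still need the explicit evaluation of $P(E_{x\rightarrow y})$ that you defer; (ii) the paper's own route to the ``exact'' closed form rests on an independence assertion that fails whenever two non-edges share a vertex, so the correct exact computation would require conditioning jointly on the fill levels of the shared filters (or the resulting expression should be labelled an approximation). Your characterisation of why ``the statement stops at the union'' is therefore a mischaracterisation of the paper's proof, but a fair criticism of it.
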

\begin{proof}
We refer the reader to Proof \ref{prf:SSMP_exact} in the appendix section.
\end{proof}
    \comment{\begin{proof}
   We note that \textbf{1)} The false-positive events are caused due to the occurrence of edges that are not in the topology. \textbf{2)} The edges that are either originating from or terminating at the destination are not considered for the false-positive event as the destination already has the knowledge of its neighbours through the neighbour-discovery process. To capture the above points when defining the false-positive events, let us consider the complementary graph of the given graph and also remove from it the complementary edges that are connected to the destination. Let the set of all edges in the complete graph be denoted by $E_c$, defined as $E_c=\{(i,j)~|~\forall i, j \in N \mbox{ such that } i \neq j\}$. Also, let the set $E_d$ denote the set of edges that are connected to the destination. With that the complementary graph excluding the edges of destination is given by $\overline{G}(N,\overline{E})$, where $\overline{E} \triangleq E_c-E_d-E$ such that $-$ represents the set difference operator. For a false-positive event to occur, at least one edge from $\overline{E}$ must be recovered during the topology learning phase. Formally, for two nodes $x, y \in N$, let $E_{x \leftrightarrow y}$ denote the event when the edge $(x, y) \in \overline{E}$ is recovered in the learning phase. Based on the recovery algorithm, the event $E_{x \leftrightarrow y}$ is defined as
    \begin{equation}\label{eqn:2}
     E_{x \leftrightarrow y}=E_{x \rightarrow y} \land E_{y \rightarrow x},
  \end{equation}
 where $E_{x \rightarrow y}$ and $E_{y \rightarrow x}$ are the events that edge $(x,y)$ and edge $(y,x)$ have been recovered from the Bloom filter shared by node $x$ and node $y$, respectively. Thus, considering the recovery of any edge in $\overline{E}$, the false-positive event is written as $fp = \bigcup_{(a,b) \in \overline{E}} E_{a \leftrightarrow b}.$ As a consequence, the FPR of the SSMP scheme is
     $P(fp)= P\biggl(\bigcup_{(a,b) \in \overline{E}} E_{a \leftrightarrow b}\biggr).$ Let us denote $r \triangleq |\overline{E}|$, and also denote $\overline{E}$ as $\{e_1,e_2,e_3,\ldots,e_r\}$, where $e_i$ denotes the $i$-{th} edge in the set when enumerated in some fashion. Since the probability of union of multiple events can be written using inclusion-exclusion principle, we have 
 \begin{eqnarray*}
     P\left( \bigcup_{1 \leq i \leq r}  e_i\right) = \sum_{1 \leq {i_1} \leq r} P\left( e_{i_1}\right)
- \sum_{1 \leq {i_1} < {i_2} \leq r} P\left( e_{i_1} \cap e_{i_2}\right)
 + \sum_{1 \leq {i_1} < {i_2} < {i_3} \leq r}
P\left( e_{i_1} \cap e_{i_2}\cap e_{i_3}\right) \\
- \ldots
+ (-1)^{r+1} P\left( \bigcap_{i=1}^r e_i \right).
 \end{eqnarray*}
In the expression for $P\left( \bigcup_{1 \leq i \leq r}  e_i\right)$, we notice that $P \left( e_1 \cap e_2 \cap e_3 \cap ... e_z\right)= \prod_{i=1}^{z}P(e_i),$ for any $2 \leq z \leq |\overline{E}|$, and this is because the generation of the index values in the Bloom filter follow an identical and statistically independent process. As a result, the exact expression for $P\left( \bigcup_{1 \leq i \leq r}  e_i\right)$ can be computed by computing $P \left( e_i \right)$ for $1 \leq i \leq r$. Towards that direction, in the rest of the proof, we compute the expression for $P(E_{x \leftrightarrow y}) = P(E_{x \rightarrow y}\land E_{y \rightarrow x})$. Again, since the embedding process at each node is statistically independent, we can write $P(E_{x \leftrightarrow y})= P(E_{x \rightarrow y})P( E_{y \rightarrow x}),$ and therefore, we only focus on the expression for $P(E_{x \rightarrow y})$. From first principles, $P(E_{x \rightarrow y})$ is given by
\begin{equation*} 
P(E_{x \rightarrow y})=\sum_{i=1}^{min(m_x,k_x\gamma_x)} P(E_{x \rightarrow y}| S_i)P(S_i), 
\end{equation*}
where $S_i$ is the event that exactly $i$ bits of the Bloom filter are set in the packet sent by node $x$, and $P(E_{x \rightarrow y}| S_i)$ is the conditional probability that the $k_{x}$ locations chosen for the edge $(x, y)$ coincides with the $i$ locations of the Bloom filter. Towards computing $P(S_{i})$, we need to compute the following attributes: (i) The total number of ways in which $k_{x}\gamma_x$ index positions can be set out of $m_{x}$ distinct positions, which is given by $A = m_x^{k_x\gamma_x}$, (ii) The total number of ways in which we can choose $i$ distinct positions out of $m_x$ positions, given by $B =$ $ {m_x}\choose{i}$, and finally, (iii) The total number of ways in which we need to select $k_x\gamma_x$ indices out of $i$ given indices such that each index in $i$ is selected at least once, which in turn can be solved as $C = S(\gamma_x k_{x},i)(i!),$ where $S(.,.)$ is the Stirling number of the second kind, defined as
    $$S(k,n)=\frac{1}{n!}\sum_{i=0}^{n}(-1)^{i} {{n}\choose{i}}(n-i)^k. \vspace{-0.2cm}$$ Thus, the overall expression for $P(S_{i})$ can be written as $\frac{BC}{A}$, expanded as,
    
 \begin{equation} \label{eqn:6}
     P(S_i)=\frac{{m_x \choose i}\sum_{j=0}^{i}(-1)^{j} {{i}\choose{j}}(i-j)^{\gamma_x k_x}}{m_x^{\gamma_x k_x}}.
 \end{equation}
 On the similar lines, the expression for $P(E_{x \rightarrow y}| S_{i})$ is
 \begin{equation} \label{eqn:7}
 P(E_{x \rightarrow y}| S_{i})={ \left( \frac{i}{m_x} \right)} ^ {k_x}, 
 \end{equation}
 which captures the probability that the hash function outputs for the edge $(x, y)$ pick up the $i$ bits that are already set in the Bloom filter. Substituting \eqref{eqn:6} and \eqref{eqn:7} in \eqref{eqn:5}, we obtain
 \begin{equation*}
      P(E_{x \rightarrow y})= \sum_{i=1}^{min(k_x \gamma_x,m_x)}\frac{i^{k_x}{ {m_x} \choose i}\sum_{j=0}^{i}(-1)^{j} {{i}\choose{j}}(i-j)^{\gamma_x k_x}}{m_x^{(\gamma_x+1)k_x}}.
 \end{equation*}
 Once the above type of expressions are computed for each edge in $\overline{E}$, we can compute $P\left( \bigcup_{1 \leq i \leq r} e_i\right)$ in closed-form. 
 \end{proof}}

Using the above theorem, we can analytically obtain the FPRs as a function of the Bloom filter size and the number of hash functions. However, an issue with this approach is that it requires the knowledge $G(N,E)$, which in turn is not aligned with the objective of the topology learning phase. As a result, we show that $\Gamma = \{\gamma_{i}, 1 \leq i \leq n-1\}$ can be used to derive an upper bound on the FPRs. Consequently, we can use the upper bound as the objective function to obtain $\mathbf{k}$ and $\mathbf{m}$ such that $\sum_{i = 1}^{n-1} m_{i} = m_{sum}$.

\begin{theorem}
\label{thm:SSMP_bound}
When the topology of the network is not known, an upper bound $P'(fp)$ of $P(fp)$ can be obtained by using $\Gamma$.
\end{theorem}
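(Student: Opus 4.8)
The plan is to begin from the exact characterization of Theorem~\ref{thm:SSMP_exact}, namely $P(fp) = P\bigl(\bigcup_{(a,b)\in\overline{E}} E_{a\leftrightarrow b}\bigr)$, and to relax it into a quantity that no longer references the unknown edge set $\overline{E}$ but depends only on the degree sequence $\Gamma$ and the design parameters $\mathbf{m}$, $\mathbf{k}$. The first step is to apply Boole's inequality to decouple the union, obtaining
$$P(fp) \;\le\; \sum_{(a,b)\in\overline{E}} P\bigl(E_{a\leftrightarrow b}\bigr).$$
By the definition in \eqref{eqn:2} together with the statistical independence of the embedding processes at distinct nodes (already exploited in the proof of Theorem~\ref{thm:SSMP_exact}), each summand factors as $P(E_{a\leftrightarrow b}) = P(E_{a\to b})\,P(E_{b\to a})$.

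The key observation is that the directional term $P(E_{a\to b})$ derived in Theorem~\ref{thm:SSMP_exact} depends only on the triple $(\gamma_a, m_a, k_a)$ and is completely independent of the identity of the target node $b$. I would therefore abbreviate it as $g_a \triangleq P(E_{a\to b})$, a quantity that is fully computable from $\Gamma$, $\mathbf{m}$, and $\mathbf{k}$ \emph{before} the topology is known. Each pairwise term then reduces to $g_a g_b$, so that the bound reads $P(fp) \le \sum_{(a,b)\in\overline{E}} g_a g_b$.

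The remaining obstacle, which is really the crux of the theorem, is that the index set $\overline{E}$ is still unknown at the time of parameter selection. I would remove this dependence by noting that $\overline{E}$ is a subset of the collection of all vertex pairs among the non-destination nodes, and that every term $g_a g_b$ is non-negative; enlarging the index set can therefore only increase the sum. This yields
$$P(fp) \;\le\; \sum_{1 \le a < b \le n-1} g_a g_b \;=\; \tfrac{1}{2}\Bigl[\bigl(\textstyle\sum_{a=1}^{n-1} g_a\bigr)^2 - \textstyle\sum_{a=1}^{n-1} g_a^2\Bigr] \;\triangleq\; P'(fp),$$
which is a function of $\Gamma$, $\mathbf{m}$, and $\mathbf{k}$ alone, and hence usable as the objective for parameter optimization under the constraint $\sum_{i} m_i = m_{sum}$.

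Finally, I would close by observing that this relaxation is well matched to the regime of interest: when $G(N,E)$ is sparse, $\overline{E}$ already contains nearly all vertex pairs, so the gap introduced by replacing $\overline{E}$ with the full pair set is small and $P'(fp)$ remains tight. The only points requiring care are (i) confirming that the directional probability $g_a$ is genuinely target-independent, which is inherited verbatim from the closed form in Theorem~\ref{thm:SSMP_exact}, and (ii) ensuring that the enlargement of the summation is carried out over unordered pairs with the destination-incident edges excluded, so that no legitimate term of the original $\overline{E}$ sum is dropped and the inequality is strictly preserved.
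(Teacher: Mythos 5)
Your proposal is correct, but the final step---the removal of the dependence on $\overline{E}$---is done differently from the paper, and the difference matters for tightness. Both arguments begin identically: Boole's inequality on the union, factorization $P(E_{a\leftrightarrow b}) = P(E_{a\to b})P(E_{b\to a})$ by independence of the per-node embeddings, and the observation that $P(E_{a\to b})$ depends only on $(m_a,\gamma_a,k_a)$ and not on $b$ (your $g_a$, the paper's $f_{fp}(m_a,\gamma_a,k_a)$). Where you diverge is in handling the unknown index set: you enlarge $\overline{E}$ to the full collection of unordered pairs among the $n-1$ non-destination nodes, obtaining $P'(fp)=\sum_{a<b} g_a g_b = \tfrac{1}{2}\bigl[(\sum_a g_a)^2-\sum_a g_a^2\bigr]$. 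The paper instead exploits $\Gamma$ a second time: since node $x$ has exactly $n-\gamma_x-2$ incident edges in the (destination-pruned) complementary graph, the term $g_x$ appears exactly that many times in the union-bound sum, and only the identities of its partners are unknown; the paper then sorts the nodes by their $f_{fp}$ values and pairs each node $x$ with the $n-\gamma_x-2$ nodes having the largest values, yielding a worst-case-partner bound. Your bound is valid and computable from $\Gamma,\mathbf{m},\mathbf{k}$ alone, and it has the virtue of a clean closed form; but it is strictly looser whenever any $\gamma_x>0$, because it retains a term $g_ag_b$ for every pair that is in fact an edge of $G$, effectively discarding the degree information except inside $g_a$ itself. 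Your closing remark correctly identifies that the gap vanishes in the very sparse regime, which is the regime of interest, so the proposal is acceptable as a proof of the theorem as stated---it just produces a weaker $P'(fp)$ than the one the paper feeds into Problem~\ref{problem:2}.
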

\begin{proof}
We refer the readers to Proof \ref{prf:SSMP_bound} in the appendix section.
\end{proof}
\comment{
\begin{proof}
It is well known that the FPRs can be upper bounded using the union bound as
 \begin{equation*}
     P\biggl(\bigcup_{(x,y) \in \overline{E}} E_{x \leftrightarrow y}\biggr) < \sum_{(x,y) \in \overline{E}} P\left( E_{x \leftrightarrow y} \right),
 \end{equation*}
 where $E_{x \leftrightarrow y}$ is the event when the edge $(x, y)$ is recovered in the topology learning phase. We know that $P(E_{x \rightarrow y})$ is a function of $m_x,\gamma_x$ and $k_x$, and likewise, $P(E_{y \rightarrow x})$ is a function of $m_y,\gamma_y$ and $k_y$. Therefore, writing $P(E_{x \rightarrow y})=f_{fp}(m_x,\gamma_x,k_x)$, we rewrite the union bound as
 \begin{equation}
 \label{eq:semi_upper_bound}
     \sum_{(x,y) \in \overline{E}} f_{fp}(m_x,\gamma_x,k_x) f_{fp}(m_y,\gamma_y,k_y).
 \end{equation}
 Out of the $|\overline{E}|$ terms in the above expression, we know that $f_{fp}(m_x,\gamma_x,k_x)$, which is the term corresponding to node $x$, appears $n-\gamma_{x}-2$ times. However, since the topology information is not known, we do not know its counterpart terms, which are of the form $f_{fp}(m_y,\gamma_y,k_y)$, i.e., the nodes that would be connected to node $x$ in the complementary graph. To circumvent this problem, we will proceed to compute an upper bound on \eqref{eq:semi_upper_bound} by assuming that in the complementary graph, node $x$ is connected to those nodes that have a large number of neighbours. In other words, we will artificially connect the false-positive edges to those $n-\gamma_x -2$ nodes which are most likely to occur. Formally, let us sort the $n-1$ nodes as $N_{sorted}=\{s1, s2, s3, s4, \ldots, s(n-1)\}$, wherein the sorting is done based on the evaluation of $f_{fp}$ on the parameters of each node as $f_{fp}(m_{si},\gamma_{si},k_{si}) \leq f_{fp}(m_{sj},\gamma_{sj},k_{sj})$ for $i < j$. Note that this is possible since the sets $\{\gamma_1,\gamma_{2},\ldots,\gamma_{n-1}\}$, $\{m_1,m_2,m_3, \ldots, m_{n-1}\}$, and $\{k_1,k_2,k_3, \ldots, k_{n-1}\}$ are fixed. For each node $sx$ in the sorted list, we pick the last $n-\gamma_{sx}-2$ distinct nodes (excluding node $sx$) of $N_{sorted}$, and use the corresponding $f_{fp}$ values when computing the counterparts of node $sx$ in \eqref{eq:semi_upper_bound}. By denoting this set of $n-\gamma_{sx}-2$ nodes by $S_{x}$, an upper bound on the FPRs can be written as
\begin{equation}
\label{eq:upper_bound}
P'(fp)=\sum_{x \in N_{sorted}} \sum_{y \in S_x} f_{fp}(m_{x},\gamma_{x},k_{x})f_{fp}(m_{y},\gamma_{y},k_{y}).
\end{equation}
This completes the proof. 
\end{proof}
}
The above theorem implies that despite not knowing the network topology, the destination has an upper bound on the FPRs that is only a function of $\Gamma$, $\mathbf{m}$, and $\mathbf{k}$. Thus, it can solve Problem \ref{problem:2} to obtain the Bloom filter parameters.


\begin{problem_stmt}
\label{problem:2}
Given $\Gamma$ and $m_{sum}$, solve
$$
\tilde{\mathbf{k}}^{*},\tilde{\mathbf{m}}^{*}=\argminA_{\{k_1,k_2, \ldots, k_{n-1}\}\{m_1,m_2, \ldots, m_{n-1}\}} P'(fp),
$$
 subject to $1\leq k_i\leq m_i$ and $\sum_{i = 1}^{n-1} m_i=m_{sum}$, where $P'(fp)$ is given in \eqref{eq:upper_bound}.
\end{problem_stmt}

\subsubsection{Simulation Results on Parameter Optimization}

In this section, we present experimental results to validate the results of Theorem \ref{thm:SSMP_bound}. To carry out the experiments, we consider a sparse vehicular network comprising $n = 8$ nodes with $14$ edges, and this network  will  be  used  for  generating the  simulation results throughout this section on topology learning protocols. For the SSMP protocol, although the FPRs and their upper bound are functions of $2(n-1)$ variables, namely: $n-1$ values of $\mathbf{m}$ and $n-1$ values of $\mathbf{k}$, we use equal sized Bloom filters for every node, i.e., $m_{i} = \frac{m_{sum}}{n-1}$, $\forall i$, for the purpose of demonstration. As a result, the number of hash functions at each node is also equal, i.e., $k_{i}=k, \forall i$. Fixing a given value of $m$, in Fig. \ref{fig:upper bound}, we plot the FPRs and their upper bound by varying the value of $k$. The plots confirm the upper bound behaviour, and more importantly, we observe that the value of $k$ for which the exact FPR is minimized is approximately same as the value of $k$ that minimizes the upper bound. This implies that, the destination can solve Problem \ref{problem:2} to obtain the Bloom filter parameters for the topology learning case especially when the Bloom filter sizes at all the nodes are equal. However, we point out that to showcase similar results for variable size Bloom filters (when the sizes of Bloom filters are different at the nodes) is challenging as the search space for $\mathbf{k}$ and $\mathbf{m}$ is too large to verify the solutions to Problem \ref{problem:2}. 

In the rest of this section, we present simulation results to show that allocating variable size Bloom filters is better for the SSMP protocol when the number of neighbours reported by the nodes are different, i.e., $\gamma_{i} \neq \gamma_{j}$, for some $i \neq j$. To present these results, we consider a sparse vehicular network consisting of $n = 8$ nodes such that the neighbour distribution reported by the nodes is given by $\Gamma = [5, 3, 4, 1, 4, 2, 4 ]$. For this network, first, we fix a value of $m_{sum} = 280$, and then obtain the FPRs for the case of equal sized Bloom filters, i.e., when $m_{i} = 40$ for each $1 \leq i \leq 7$. The plot of the FPR for the above case is presented in Fig. \ref{fig:equal_var}, which shows that $k = 6$ minimizes the FPRs. Now, we explore whether variable size Bloom filters would yield lower FPRs than the one offered by the equal size Bloom filters. To verify this possibility in a computationally-friendly manner, we consider all possible ways of distributing a total of $m_{sum} = 280$ (35 bytes) across the $7$ nodes, as this would give us the overall search space for $\mathbf{m}$. However, given the complexity constraints, instead of distributing 35 bytes among the $7$ nodes at the granularity of bits, we incorporate the idea of distributing $36$ bytes among $7$ distinct nodes with the granularity of $2$ bytes such that each node gets at least 2 bytes and size of Bloom filter allotted to each node is changed with the granularity of 2 bytes. With this reduced search space for $\mathbf{m}$, we then proceed to solve Problem \ref{problem:2} to compute $\mathbf{k}$ that minimizes the upper bound on the false-positives for a given $\mathbf{m}$. After solving the conditional minimization problem for a given $\mathbf{m}$, we repeat the process over the reduced search space of $\mathbf{m}$, and then pick that pair $(\hat{\mathbf{m}}, \hat{\mathbf{k}})$ that minimizes the upper bound for this exercise. In Fig. \ref{fig:equal_var}, we also plot the exact FPRs of the SSMP protocol using the pair $(\hat{\mathbf{m}}, \hat{\mathbf{k}})$. The plot shows that the FPRs offered by variable size Bloom filters is significantly lower than that of equal size Bloom filters for the same $m_{sum}$. We highlight that the actual benefits of variable size Bloom filters can only be studied by solving Problem $\ref{problem:2}$ over the entire search space in an exhaustive manner. In summary, we have shown that the right choice of the Bloom filter parameters can be made using the upper bound provided in Theorem \ref{thm:SSMP_bound}, which only requires the knowledge of the neighbour distribution of the nodes and not the exact topology of the network. 

\begin{figure}
    \centering
     \includegraphics[trim={0 0 0 5cm},clip,scale=0.2]{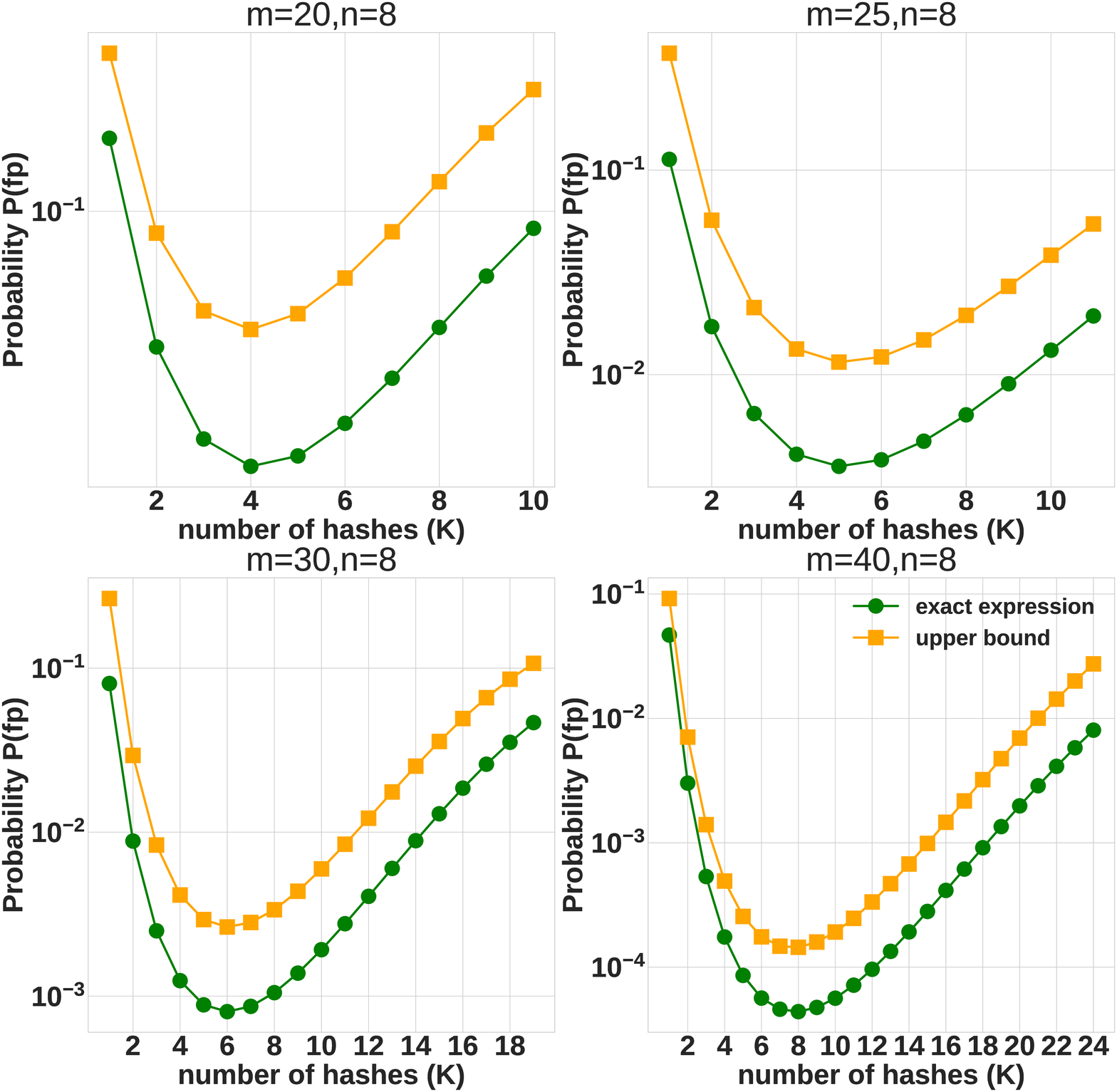}
\caption{Plots for the proposed upper bounds on FPRs of the SSMP technique for a network with $n=8$. The plots confirm that the bounds provide the same minima in terms of $k$ without the topology knowledge.}
    \label{fig:upper bound}
\end{figure}

\begin{figure}
    \centering
     \includegraphics[trim={0 0 0 4cm},clip,scale=0.3]{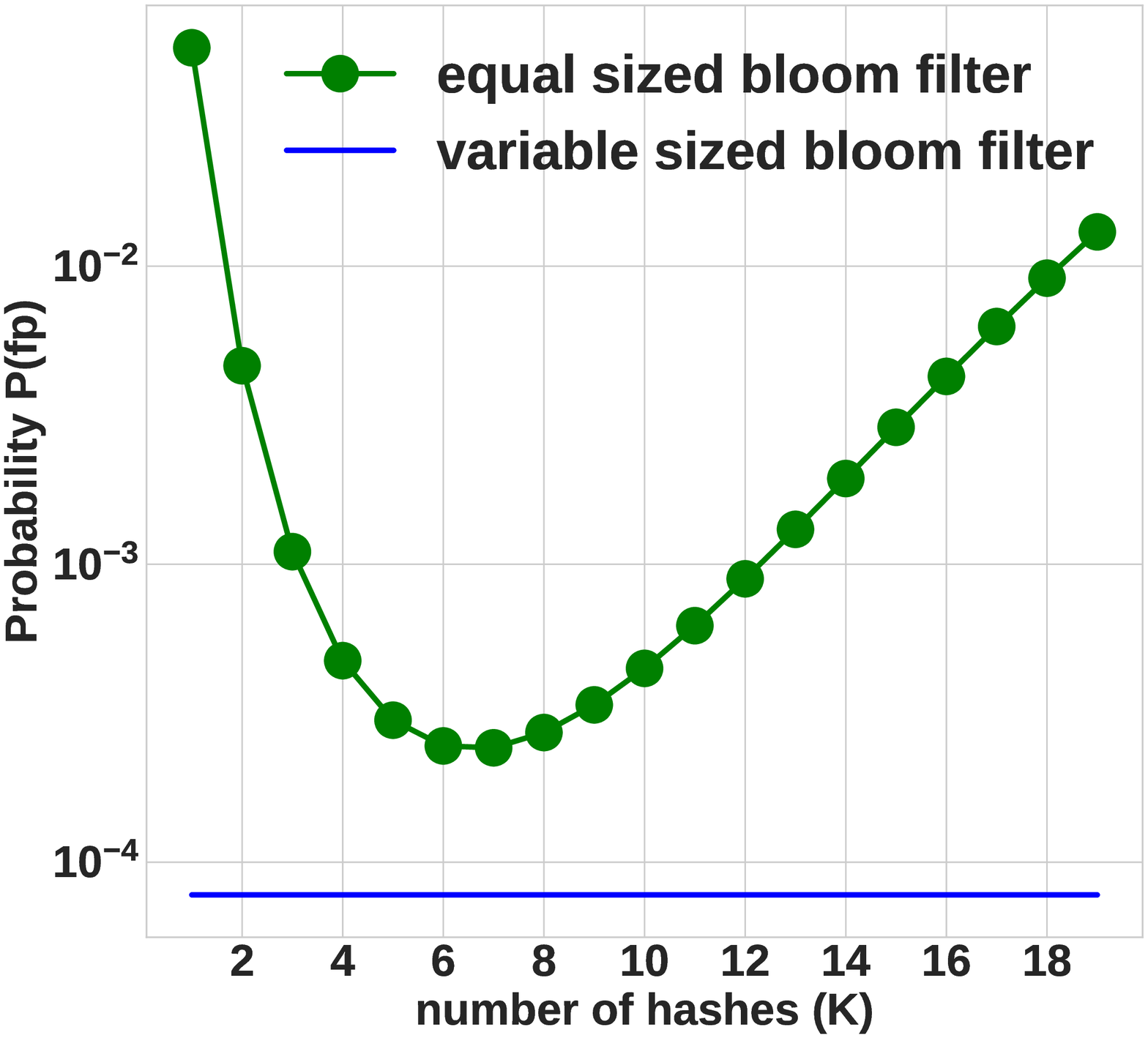}
\caption{FPRs of the SSMP protocol with equal sized Bloom filter and variable sized Bloom filters for the nodes. For a network setting with $\Gamma=[5, 3, 4, 1, 4, 2, 4 ]$ and $m_{sum}=280$, we introduced a fixed size Bloom filter for all nodes i.e., $m_i=40$, and compared that with variable sized Bloom filter given by $\mathbf{m} = [48,48, 48, 16, 48, 32, 48]$ with number of hashes as $\mathbf{k} = [8,10, 8, 7, 8, 9, 8]$. The plots show that variable sized Bloom filters outperform the equal sized case. Note that horizontal line is used only to indicate the FPR of the variable sized Bloom filter.}
    \label{fig:equal_var}
\end{figure}

\subsection{Multi-Source Single-Packet Embedding}

In the SSMP protocol, although a number of nodes routed the packets to the destination, they do not utilize this opportunity to embed the information on their neighbours. As a result, the number of packet transmissions across the nodes may be an overhead. To circumvent this problem, the MSSP protocol focuses on embedding the neighbours of all the nodes in one single packet. This gives a clear advantage over the SSMP embedding method as it requires fewer packet transmissions, thereby increasing the energy efficiency of the wireless nodes. While the MSSP protocol seems to reduce the number of packet transmissions, its accuracy of the topology learning process needs to be studied since a given space of Bloom filter portion is accessed by all the nodes in the network instead of just one node. Towards studying these features, in the following sections, we introduce the MSSP protocol by describing the procedure for embedding the neighbours, its routing protocol and the topology recovery process. We specifically focus on deriving analytical expressions for the FPR so that we can compare that with the SSMP protocol.
 
\subsubsection{Embedding and Packet Routing Process}
   
The embedding process of the MSSP protocol (as depicted in Fig. \ref{fig:MSSP2}) is similar to that of the SSMP protocol. However, given that all the nodes embed the information of their neighbours on one packet, the final stage of setting the Bloom filter values undergoes some changes. In particular, node $i$ upon generating the string $EID_{(i,j)}||seq$ to embed the information on its edge with node $j$, uses $k$ Hash functions to generate the indices on the Bloom filter of length $m$, and then sets those indices to one. In contrast with the SSMP protocol, the Bloom filter size used by all the nodes is the same, and moreover, all the nodes use $k$ hash functions to generate the indices (instead of $k_{i}$ indices).

For the MSSP protocol, the packet routing is done using a dedicated algorithm that minimizes the number of transmissions across the nodes such that the packet traverses all the nodes before reaching the destination. For instance, a tree-based protocol \cite{TBR} will ensure that the packet reaches the destination with minimal overheads without forming any loops within the network. Once a node in the topology receives the packet from any of the neighbouring nodes, it will have to check whether it has already embedded its neighbours in the packet; this is because the packet may reach a node more than once en-route to the destination.
    
\subsubsection{Topology Recovery Process}
    
The topology recovery process of the MSSP protocol is similar to that of the SSMP protocol. However, given that all the nodes embed the edges with their identities on one packet, the destination verifies the presence of all the edges of the complete graph, and then generates the adjacency matrix. Similar to the SSMP protocol, an edge $(i, j)$ is said to be present in the topology only if both node $i$ and node $j$ have embedded the edge $(i, j)$ in the Bloom filter. 
    
   
     \begin{figure}
     \centering
       \includegraphics[trim={3.55cm 0 0 0},clip,scale=0.4]{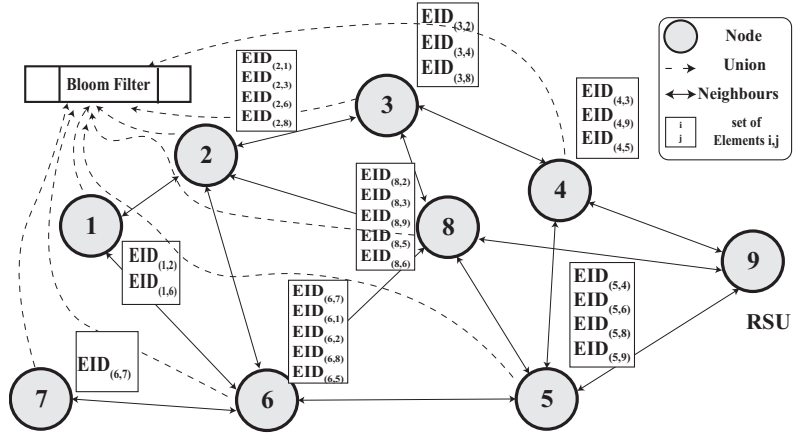}
        \caption{A depiction of MSSP technique where each node in the topology will insert the EID of its neighbour to a single Bloom filter.}
        \label{fig:MSSP2}
    \end{figure}
    
\subsubsection{Optimization of Bloom Filter Parameters}
    
    \label{sub_sec:param ope}
    Similar to the SSMP algorithm, non-existing edges in the topology may also be recovered as false-positive events. Thus, for the MSSP protocol, the Bloom filter parameters $m$ and $k$ must be chosen such that the FPRs are minimized. Towards that direction, the following theorem provides a closed-form expression on the FPR of the MSSP protocol.
     \begin{theorem}
     \label{thm:MSSP}
     The closed-form expression for the FPR of the MSSP protocol can be computed as a function of $k$, $m$, and the total number of edges embedded by all the nodes. 
     \end{theorem}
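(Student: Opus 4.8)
The plan is to mirror the conditioning argument used for the SSMP scheme in Theorem~\ref{thm:SSMP_exact}, while isolating the one structural difference that makes the MSSP analysis distinct: every node now writes into a \emph{single} shared Bloom filter of size $m$, so the recovery events of different complementary edges are coupled through the common filter rather than being supported on disjoint per-node filters. Let $t \triangleq \sum_{i=1}^{n-1}\gamma_i$ denote the total number of directed edge identities embedded across all nodes, each contributing $k$ independent hash outputs, and let $r \triangleq |\overline{E}|$ be the number of complementary, non-destination edges. As in the SSMP case, the false-positive event is $fp = \bigcup_{(x,y)\in\overline{E}} E_{x\leftrightarrow y}$, where $E_{x\leftrightarrow y} = E_{x\rightarrow y}\land E_{y\rightarrow x}$ demands that both asymmetric identities $EID_{(x,y)}$ and $EID_{(y,x)}$ be recovered. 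Since each undirected edge of the topology contributes two embedded identities, $|E|=t/2$, and hence $r$ is itself determined by $n$ and $t$; this is what will let the final answer depend only on $k$, $m$, and the total number of embedded edges.

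First I would condition on $S_i$, the event that exactly $i$ of the $m$ bits are set once all $t$ identities have been inserted. The count $P(S_i)$ is obtained by the same surjection argument as in the SSMP proof, except that the per-node hash budget $\gamma_x k_x$ is replaced by the \emph{global} budget $tk$: counting the ways $tk$ independent uniform outputs occupy exactly $i$ of $m$ positions yields
\begin{equation*}
P(S_i) = \frac{\binom{m}{i}\,i!\,S(tk,i)}{m^{tk}},
\end{equation*}
with $S(\cdot,\cdot)$ the Stirling number of the second kind as defined in the excerpt. Conditioned on $S_i$, the symmetry of which $i$ positions are occupied makes the recovery of a single directed identity have probability $(i/m)^{k}$, and because the two asymmetric identities of an edge use distinct hash inputs, $P(E_{x\leftrightarrow y}\mid S_i) = (i/m)^{2k}$.

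The crux of the proof, and where it departs most from Theorem~\ref{thm:SSMP_exact}, is the recombination over all of $\overline{E}$. In the SSMP analysis the product rule applied because distinct nodes owned disjoint Bloom filters; here the shared filter destroys unconditional independence, so a naive inclusion–exclusion would not factor. The key observation I would exploit is that, \emph{conditioned on the occupied bit-set}, the recoveries of distinct complementary edges become mutually independent, since each edge's recovery depends only on its own $2k$ hash positions, which are drawn independently of the embedding and of the other edges. This gives $P(fp\mid S_i) = 1 - \bigl(1-(i/m)^{2k}\bigr)^{r}$, and marginalizing over $i$ produces the closed form
\begin{equation*}
P(fp) = \sum_{i=1}^{\min(tk,\,m)} \frac{\binom{m}{i}\,i!\,S(tk,i)}{m^{tk}}\left[\,1 - \Bigl(1-(i/m)^{2k}\Bigr)^{r}\,\right],
\end{equation*}
which depends only on $k$, $m$, and $t$ (through $r$), as claimed.

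I expect the main obstacle to be the rigorous justification of this conditional-independence step: one must argue that conditioning on the \emph{number} of set bits, rather than on their exact configuration, still delivers both the per-edge recovery probability $(i/m)^{k}$ and independence across edges. This follows from the exchangeability of the bit positions but deserves a careful statement, since it is precisely the property that resolves the coupling introduced by the shared filter. A secondary technical point is confirming that $r$ can be written purely in terms of $n$ and $t$, which requires correctly excluding the edges incident to the destination from both $\overline{E}$ and the embedding, so that the resulting expression genuinely depends only on $k$, $m$, and the total number of embedded edges.
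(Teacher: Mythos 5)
Your proposal follows essentially the same route as the paper's proof: condition on the number of occupied Bloom-filter bits via the Stirling-number surjection count, assign each complementary edge the conditional recovery probability $(i/m)^{2k}$, and invoke conditional independence across the edges of $\overline{E}$; your closed form $\sum_i P(S_i)\bigl[1-(1-(i/m)^{2k})^{r}\bigr]$ is exactly the paper's double sum with the binomial expansion over the number of falsely recovered edges collapsed. The only slip is the claim $|E|=t/2$: since the RSU does not embed its neighbours, $2|E| = t + \gamma_{RSU}$, so $r$ depends on $\gamma_{RSU}$ (known at the destination) as well as on $n$ and $t$ --- a point the paper handles in a separate proposition and that you already flag as needing care.
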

     \begin{proof}
     We refer the reader to Proof \ref{prf:MSSP} in the appendix section.
     \end{proof}
     \comment{\begin{theorem}
     \label{thm:MSSP}
     The closed-form expression for the FPR of the MSSP protocol can be computed as a function of $k$, $m$, and the total number of edges embedded by all the nodes.
     \end{theorem}
     \begin{proof}
     Similar to the proof on FPRs of the SSMP protocol, we consider the set $\overline{E}$, which comprises the set of edges not present in the actual topology, excluding the complementary edges connected to the destination. Since the total number of edges in the topology is $|E|$, the total number of edges embedded on the packet will be $2|E| - \gamma_{RSU}$, where the factor $2$ captures the fact that a given edge is embedded twice by both its vertices, and $\gamma_{RSU}$, which represents the number of neighbours of the destination, is discounted as the RSU does not embed its neighbours in the Bloom filter. From first principles, the FPR of the MSSP protocol is given by $P(fp)=\sum_{i=1}^{|\overline{E}|}P(fp,i),$ where $P(fp,i)$ denotes the probability of the false-positive event wherein $i$ non-existing edges, for $1 \leq i \leq |\overline{E}|$, are recovered in the topology learning process. Furthermore, since the false-positive event also depends on the number of bits already set in the Bloom filter, we can write 
     \begin{equation*}
        P(fp,i)=\sum_{j=1}^{min(m,(2|E|-\gamma_{RSU}})k)P((fp,i) | S_j )P(S_j),  
    \end{equation*}
    where $P(S_j)$, given by
    \begin{equation*}
     P(S_j)=\frac{{m \choose j}\sum_{t=0}^{j}(-1)^{t} {{j}\choose{t}}(j-t)^{\gamma k}}{m^{\gamma k}},
 \end{equation*}
   denotes the probability that $j$ bits are set in the Bloom filter such that $\gamma = 2|E|-\gamma_{RSU}$, and $P((fp,i) | S_j )$ denotes the probability that $i$ non-existing edges appear in the Bloom filter conditioned on the event $S_{j}$. Given that the false-positive events of each non-existing edge are statistically independent, we can write $P((fp,i) | S_j )$ using binomial expansion as 
     \begin{equation*}
P((fp,i) | S_j )={{|\overline{E}|}\choose{i}}\left(\delta\right)^i\left(1-\delta\right)^{|\overline{E}|-i},
\end{equation*}
where $\delta=\left( \frac{j}{m} \right)^{2k}$ is the probability that the $2k$ index values chosen by both the vertices of a non-existing edge lies on the $j$ indices already set in the Bloom filter. By plugging all the derived equations, the overall FPR is given by 

\begin{small}
\begin{equation*}
P(fp)=\sum_{i=1}^{|\overline{E}|}\sum_{j=1}^{min(m,(2|E|-\gamma_{RSU})k)}{{|\overline{E}|}\choose{i}}\left(\delta\right)^i\left(1-\delta\right)^{|\overline{E}|-i}P(S_j).
\end{equation*}
\end{small}
\end{proof}
}
From the expression of the FPR, it is clear that $P(fp)$ is a function of $k$, $m$, $2|E|-\gamma_{RSU}$ and $|\overline{E}|$. Since $\gamma_{RSU}$ is already known to the destination, the destination must somehow learn $|E|$ and $|\overline{E}|$. The following proposition (which can be proved in a straightforward manner) shows that the destination can compute $|E|$ and $|\overline{E}|$ by using the knowledge of $\{\gamma_1,\gamma_2,\ldots,\gamma_{n-1}\}$, which are shared by each node in the beginning of the topology learning phase. 
\begin{proposition}
Using $\Gamma = \{\gamma_1,\gamma_2,\ldots,\gamma_{n-1}\}$ and $\gamma_{RSU}$, we have $|\overline{E}|={{n}\choose{2}}-{\frac{\gamma_{RSU} + \sum_{i=1}^{n-1}\gamma_i}{2}}-n+\gamma_{RSU}+1,
$ and $2|E|= \gamma_{RSU} + \sum_{i=1}^{n-1}\gamma_i.$
\end{proposition}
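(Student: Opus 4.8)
The plan is to establish the two identities separately, treating the formula for $2|E|$ first since the expression for $|\overline{E}|$ depends on it. For $2|E|$, I would simply invoke the handshaking lemma: in any undirected graph, the sum of the vertex degrees equals twice the number of edges. By definition, node $i$ has degree $\gamma_i$ for $1 \le i \le n-1$ and the destination has degree $\gamma_{RSU}$, so summing the degrees over all $n$ nodes gives $2|E| = \gamma_{RSU} + \sum_{i=1}^{n-1}\gamma_i$ at once. No further manipulation is needed for this part.

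For $|\overline{E}|$, I would recall from the proof of Theorem \ref{thm:SSMP_exact} that $\overline{E} \triangleq E_c - E_d - E$, where $E_c$ is the edge set of the complete graph on the $n$ vertices, $E_d$ is the set of all edges incident to the destination in that complete graph, and $E$ is the actual edge set. The first counting step is direct: $|E_c| = \binom{n}{2}$ and $|E_d| = n-1$ (the destination is joined to each of the other $n-1$ nodes in the complete graph), so the set $E_c - E_d$ of all potential edges among the $n-1$ non-destination nodes has $\binom{n}{2} - (n-1) = \binom{n-1}{2}$ elements.

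The step that requires the most care — the main, if minor, obstacle — is accounting for the removal of $E$ without double-counting. The $\gamma_{RSU}$ actual edges incident to the destination have already been deleted as part of $E_d$, so subtracting $E$ only further removes the $|E| - \gamma_{RSU}$ genuine edges that lie among the non-destination nodes. Hence $|\overline{E}| = \binom{n}{2} - (n-1) - (|E| - \gamma_{RSU})$.

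Finally, I would substitute $|E| = \tfrac{1}{2}\bigl(\gamma_{RSU} + \sum_{i=1}^{n-1}\gamma_i\bigr)$ from the first identity into this expression and collect terms. Regrouping $-(n-1)+\gamma_{RSU}$ as $-n+\gamma_{RSU}+1$ then recovers the claimed closed form $|\overline{E}| = \binom{n}{2} - \tfrac{1}{2}\bigl(\gamma_{RSU} + \sum_{i=1}^{n-1}\gamma_i\bigr) - n + \gamma_{RSU} + 1$, completing the argument. The whole proof is elementary, the only genuine subtlety being the disjointness bookkeeping between $E_d$ and the destination-incident part of $E$, which is why the proposition is stated as provable in a straightforward manner.
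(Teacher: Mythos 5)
Your proof is correct, and since the paper explicitly omits the argument (stating only that the proposition "can be proved in a straightforward manner"), your handshaking-lemma computation of $2|E|$ followed by the careful count $|\overline{E}| = \binom{n}{2} - (n-1) - (|E| - \gamma_{RSU})$ is exactly the straightforward argument the authors had in mind, with the disjointness between $E_d$ and the destination-incident edges of $E$ handled correctly.
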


Using the above proposition, the destination can choose the Bloom filter parameters by solving Problem \ref{problem_stmt_MSSP} without knowing the topology. 
{\begin{problem_stmt}
\label{problem_stmt_MSSP}
For a given $\{\gamma_1,\gamma_2, \ldots,\gamma_{n-1}\}$, $\gamma_{RSU}$, and a Bloom filter size $m$ solve 
\\
$$
k^*=\argminA_{1 \leq k \leq m} P(fp).
$$

\end{problem_stmt}}
We highlight that the objective function of the optimization problem in Problem \ref{problem_stmt_MSSP} continues to be the exact expression of the FPR, and this is one of the benefits of the MSSP protocol over the SSMP protocol. 

\begin{table*}
     \centering
     \caption{A brief summary of comparison between the SSMP and the MSSP protocols.}
     \resizebox{16cm}{!}{
     \begin{tabular}{|c|c|c|}
     \hline
        \textbf{Metric for evaluation} & \textbf{SSMP} & \textbf{MSSP} \\
         \hline
          False-positive rates & Better than MSSP, can be used & Not as good as SSMP \\
          &when ultra-reliability is a key requirement&\\
          \hline
          Transmission overhead & High & Low \\
          \hline
          Parameters optimization & Can find upper bound & Can find closed-form expression\\
          & when the topology is not known & when the topology is not known\\
          \hline
          Network congestion & High & Low \\
          \hline
          Coordination for implementation & Low: any node can start the process & High: nodes must decide the embedding order\\
          \hline
     \end{tabular}}
     \label{tab:comp_table_learning}
 \end{table*}
 
 \subsection{Comparison of Topology Learning Protocols}
 \label{sec:comparison}
 
For a network with $n = 8$ and $14$ edges, we compare the FPRs of the two algorithms in Fig. \ref{fig:my_label} when the total Bloom filter size is the same, i.e., the Bloom filter size for each node in the SSMP protocol is $m_i=\frac{m_{sum}}{n-1}$, whereas the Bloom filter size of the MSSP protocol is $m=m_{sum}$. The plots in Fig. \ref{fig:my_label} confirm that the SSMP protocol outperforms the MSSP protocol even though the Bloom filter size distribution across the nodes is equal (which may be sub-optimal). Thus, if ultra-reliability in false-positives is required, then the SSMP protocol must be preferred. 
 \begin{figure}
     \centering
     \includegraphics[trim={0 0 0 5cm},clip,scale=0.16]{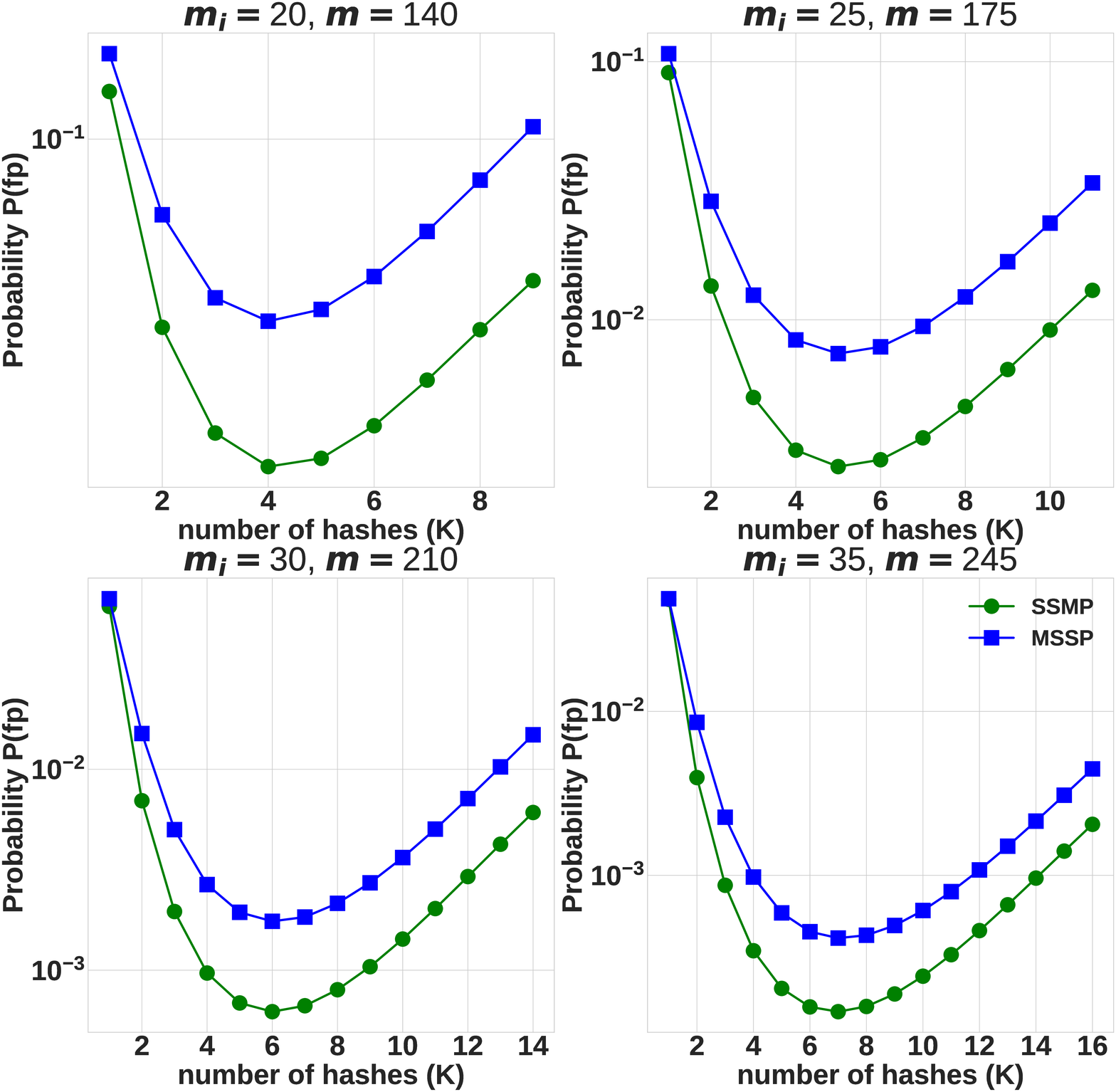}
     \vspace{-1cm}
     \caption{Comparison between SSMP and MSSP techniques in terms of FPRs, with equal sized Bloom filter for SSMP. The Bloom filter size for MSSP protocol is equal to sum of  Bloom filter sizes of all the nodes in the SSMP technique i.e., $m=m_{sum}=\sum_{i = 1}^{n-1} m_i$.} 
     \label{fig:my_label}
 \end{figure}

Finally, while the SSMP protocol outperforms the MSSP protocol in terms of FPRs, the former algorithm has poor load distribution on the nodes during the routing protocol; this is because the nodes geographically closer to the destination have to repeatedly forward the packets of other nodes, thereby resulting in skewed distribution on the number of packet transmissions across the nodes. A summary of comparison between the topology learning protocols is also provided in Table \ref{tab:comp_table_learning}, capturing their FPRs, transmission overheads, the ease of estimating Bloom filter parameters, load distribution during the routing process, and implementation complexity.

\subsection{Security Analysis}
\label{sec:threat}

Motivated by providing a secure provenance method, the topology learning phase handles confidentiality and integrity threats through the use of (i) pre-shared secret key between the nodes and the core network for authentication, (ii) public-key cryptography based authentication between nodes during the neighbour discovery protocol, (iii) message authentication codes for sharing the information on the number of neighbours, (iv) encryption algorithms using shared keys between the nodes to generate the edge identities and the double-edge identities, and finally, (v) Bloom filters to embed the information on the edges. Since the above measures take care of threats from external nodes (nodes outside the network), in this section, we focus on insider attacks, i.e., attacks executed by one or more nodes that are part of the network. Within this class, we address eavesdropping on the identity of the edges connected to the other nodes, and the edge-insertion attack: an integrity threat using which an internal node modifies the protocol so that the RSU learns non-existing edges in the network. This way, the attack is relevant to this work as it attempts to reduce of sparsity of the network, thereby maligning the original objective of capitalizing on sparse topology for provenance recovery. Towards mitigating eavesdropping from insiders, the idea of generating edge identities using an encryption algorithm and its subsequent use as input to the Bloom filter solves the problem. Towards mitigating the edge-insertion attack, the idea of using message authentication code for sharing the neighbour information forbids nodes to modify others' information. Also, the idea of using reinforcement check of registering an edge in the topology only when both its vertices embed that edge in the Bloom filter, prevents an insider from adding a non-existing edge in the Bloom filter. While these methods prevent easy execution of edge-insertion attack, we also study their sophisticated variants in the following section.

\subsubsection{Edge-Insertion Attack}
\label{sec:Integrity Threat}


In this section, we study the feasibility of edge-insertion attacks on the proposed algorithms in the topology learning phase. In particular, let us consider an insider attack, wherein one of the nodes, say node $d$, for some $d \in N$, wishes to add a non-existing edge, say with node $x$, in the topology. To successfully achieve this attack without generating a false-positive event at the destination, the adversary must complete the neighbour discovery protocol with node $x$ (that is not one of its neighbours) possibly through a wormhole attack \cite{WHE} with the help of another node as its proxy. This way, both node $x$ and node $d$ would report this edge in the vector $\Gamma$ as well as in the Bloom filter. However, if the wormhole attack cannot be implemented during neighbour discovery, then the number of neighbours reported by node $x$ would be different from that recovered from the Bloom filter. As a result, a false-positive event would be generated. Thus, we claim that an insider node cannot add a new edge in the topology without implementing a wormhole attack in the neighbour discovery phase.
    
In the rest of this section, we first explain the detailed process of implementing the above discussed edge-insertion attack, and also point out potential solutions that could be used to mitigate this attack. We note that an edge-insertion attack requires the adversary to compromise two phases, namely: the neighbour discovery phase and the topology learning phase. Before the beginning of the neighbour discovery phase, let us assume that one of the nodes in the network, say node $d$, compromises the credentials of node $x$. For instance, we could assume that the adversary gets hold of both the private key (used for public-key cryptography based authentication) and the derived key (used for embedding the identity of the edges in the SSMP and MSSP algorithms). Subsequently, node $d$ will try to advertise itself as node $x$ to another node, say node $y$, in the neighbour discovery phase through the help of an external node by forming a wormhole. Once node $d$ authenticates as node $x$ with node $y$, node $y$ will embed the edge $(x, y)$ in the Bloom filter in the topology learning phase. However, to force the RSU to register the edge $(x, y)$ in the topology, node $d$ also needs to modify the Bloom filter sent by node $x$ by adding the information on the edge $(x, y)$. In order to achieve this last task, the adversary will have to use the derived key that has already been compromised. Thus, by compromising the credentials of node $x$, the adversary can execute an edge-insertion attack through a wormhole. 

One easy way of preventing this edge-insertion attack is to ensure that both the private key and the derived key at a node are securely stored in the root of the device. However, given that these keys are often retrieved and stored in the memory during the neighbour discovery phase (the private key $K_{i_{pvt}}$ in this case) or the topology learning phase (the derived key $K_i$ in this case), they are vulnerable for compromise through side-channel attacks by the attacker. Therefore, we need to consider the possibility of compromising the following different combinations of keys, namely: (i) only the derived key $K_i$ is compromised, (ii) only the private key $K_{i_{pvt}}$ is compromised, and finally, (iii) both the private key and the derived keys are compromised. Under case (i), i.e., when only the derived key is compromised, we note that the edge-insertion attack is not possible since the neighbour discovery phase is already secure due to secure private key. Under case (ii), i.e., when only the private key is compromised, while the neighbour discovery phase can be compromised, the attacker cannot add the information on the edge $(x, y)$ in the Bloom filter as it does not know the derived key. Although the attacker does not know the derived key, it can randomly choose some locations on the Bloom filter and set them hoping that they would coincide with the locations that would be chosen by node $x$ when embedding the edge $(x, y)$ in the Bloom filter. Formally, when using a random attack, node $d$ is said to successfully execute an impersonation attack during the topology learning phase if the Bloom filter locations it chooses match the locations that would be chosen by node $x$ when embedding the edge $(x, y)$ in the Bloom filter. 

\begin{proposition}
\label{prop:security_imp_attack}
When using the MSSP algorithm, the success-rate of perfect impersonation attack is non-zero.
\end{proposition}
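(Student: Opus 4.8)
The plan is to set up an explicit probabilistic model for the random attack and then exhibit a strictly positive lower bound on the success probability. First I would fix the victim's ``target'': when the honest node $x$ embeds the edge $(x,y)$ in the common Bloom filter, it derives $k$ indices by hashing $EID_{(x,y)}||seq$ (together with the pseudo-random nonces) under its derived key $K_x$. Since the paper models each hash as returning $k$ statistically independent values uniform over $\{0,1,\ldots,m-1\}$, these indices determine a set $T$ of positions that node $x$ would set. The key structural observation is that node $d$, lacking $K_x$, cannot compute $EID_{(x,y)}$ and therefore cannot reproduce $T$; the best it can do under a random attack is to set $k$ positions chosen independently of $T$, which I would model as $k$ indices $J$ drawn uniformly at random over $\{0,1,\ldots,m-1\}$.

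Next I would define the perfect impersonation success event precisely as the event that node $d$'s chosen positions reproduce node $x$'s footprint for edge $(x,y)$, so that the RSU verifies all $k$ locations of $(x,y)$. Because $J$ is generated independently of $T$, the success probability is obtained by averaging, over the (uniform) target $T$, the probability that the independent draw $J$ matches it.

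The final step is simply to observe that this probability is strictly positive. Even the crudest bound suffices: the event in which node $d$'s $k$ uniform indices coincide, as an ordered tuple, with node $x$'s $k$ indices already has probability $(1/m)^k$, and this event is contained in the success event because equal tuples set identical bits. Hence $P(\text{success}) \geq (1/m)^k > 0$ for any finite $m$ and $k \geq 1$, so the success rate of the perfect impersonation attack is non-zero, as claimed.

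I expect the main obstacle to be not the positivity itself (which is immediate) but pinning down a clean model that the later security discussion can reuse. In particular, one must decide whether ``match'' means exact equality $J = T$ or mere coverage $T \subseteq J$, and must handle hash collisions, so that $T$ may contain fewer than $k$ distinct positions. The positivity conclusion is robust to all these choices, since in every admissible case each individual index configuration carries positive probability; the genuinely careful combinatorics would only arise if one wished to replace the $(1/m)^k$ lower bound by an exact closed-form success rate, which would require summing over the number of distinct positions in $T$.
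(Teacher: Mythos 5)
Your proof is correct, and it reaches the conclusion by a genuinely more elementary route than the paper. The paper conditions on the event $C_i$ that the legitimate edges have already set $i$ positions of the shared MSSP Bloom filter, and then counts the configurations in which node $d$ and node $x$ independently choose the same $j$ fresh positions outside those $i$ (with the remaining $k-j$ picks falling anywhere among the $i+j$ set positions); this yields the expression $p_{succ} = \sum_{i} \Pr(C_i)\sum_{j=0}^{k}\binom{m-i}{j}(p_{match,j})^2$ together with a lower bound $p_{match,j} \geq \binom{k}{j}(j!)\,i^{(k-j)}/m^{k}$. The point of that finer accounting is that the paper's notion of perfect impersonation is Bloom-filter-state equality, so indices landing on already-set positions need not coincide exactly — the attack succeeds more easily when the filter is densely populated. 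You instead exhibit a single sub-event, exact ordered-tuple coincidence $J=T$, of probability $(1/m)^k>0$, and correctly observe that it is contained in the success event under every reasonable definition of ``match,'' which is all that positivity requires. What your argument buys is brevity and robustness to modelling choices; what it gives up is any quantitative handle on $p_{succ}$, which the paper's conditional decomposition provides and which underlies the subsequent claim that the success rate is negligible for large $m$. As a proof of the proposition as literally stated, yours is complete.
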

\begin{proof}
We refer the reader to Proof \ref{prf:security_imp_attack} in the appendix section.
\end{proof}
\comment{\begin{proof}
\bl{Suppose that node $d$,  some $d \in N$, attempts to add the edge $(x, y)$, for some $x \neq d, y$, in the Bloom filter. Since node $d$ does not have the identity of the edge $(x, y)$ (since it is private and unclonable), it attempts to randomly generate $k$ statistically independent index values in the Bloom filter with uniform distribution. In such a case, the probability of success of impersonation attack is the probability that the index values generated by node $d$ coincides with that of node $x$ when it would embed the edge $(x, y)$. We observe that the success-rate of impersonation attack depends on the number of index values chosen by the other edges in the network. Formally, the success-rate is 
\begin{equation}
\label{eq:success-rate_imp_attack}
p_{succ} = \sum_{i = 1}^{min(m,(2|E|-\gamma_{RSU})k)} \mbox{Pr}(C_{i}) \sum_{j = 0}^{k} \binom {m - i}{j} (p_{match, j})^{2},
\end{equation}
where $\mbox{Pr}(C_{i})$ is the probability that $i$ positions, for $1 \leq i \leq min(m,(2|E|-\gamma_{RSU} )k)$, of the Bloom filter are chosen by all the legitimate edges, the term $p_{match, j}$ is the probability that node $d$ chooses $k$ index values in the Bloom filter such that $j$ distinct index values, for $0 \leq j \leq k$, are chosen outside the set of $i$ index values (which are already chosen by the other edges) and the remaining $k - j$ index values are chosen at any of those $i + j$ index values of the Bloom filter. Note that the term $p^{2}_{match, j}$ appears in \eqref{eq:success-rate_imp_attack} due to statistical independence between the index values chosen by node $d$ and node $x$. We also show that $p_{match, j}$ is lower bounded by $\frac{\binom {k}{j}((j!)(i^{(k-j)}))}{m^{k}}$ using the inclusion-exclusion principle  Therefore, $p_{succ}$ can be lower bounded by
\begin{equation}
\label{eq:success-rate_impersonation}
\sum_{i} \mbox{Pr}(C_{i}) \sum_{j = 0}^{k} \binom {m - i}{j} \left(\frac{\binom {k}{j}((j!)(i^{(k-j)}))}{m^{k}}\right)^{2},
\end{equation}
where the range of $i$ is same as in \eqref{eq:success-rate_imp_attack}.
}
\end{proof}}

Although the above result shows that the probability of successful impersonation attack is non-zero, it can be verified that the probability of these events is negligible as long as the Bloom filter size is sufficiently large. Therefore, despite using the sophisticated wormhole attack by compromising the private key, our framework ensures that the success-rate under case (ii) is negligible. Finally, when addressing case (iii), we need to secure the neighbour discovery phase wherein the adversary must not be able to impersonate by compromising the private key of the victim. Towards that direction, we could replace the public-key cryptography based authentication in the neighbour discovery phase with device fingerprinting based authentication mechanisms \cite{device_fingerprint}. With such ideas, a secret key is derived using the unique perturbations arising out of the physical properties of each device, and subsequently these unique features can be verified by the other devices that gather this information when they authenticate with the gateway nodes upon entering the network. This implies that even if the private key stored on the device memory is compromised by the attacker, its unique physical properties cannot be cloned by the adversary.

In general, when multiple nodes within the network collude, they may execute generalized versions of wormhole attack during the neighbour discovery protocol, thereby adding several non-existing edges to the topology. As a result, we recommend that state-of-the-art mitigation techniques against wormhole attacks \cite{WHE} must be employed. 

 

\section {Payload Phase with Topology Knowledge}
\label{sec:payload}

Once the topology of the network is recovered in the first part of the coherence time, the next part is used for the payload phase, wherein one of the nodes communicates its data to the destination in a multi-hop manner. As a result, the payload part of the packet carries the data of a source, whereas its provenance portion carries information on the path travelled by the packet. Using \cite{BFP}, we apply a variant of deterministic edge embedding (DE) and deterministic double-edge embedding (DDE) algorithms to assist provenance recovery at the destination. Although \cite{BFP} introduced the DE and DDE methods to resolve paths due to no knowledge of topology at the destination, we observe that these methods continue to help in resolving paths with topology knowledge especially when the topology contains cycles. However, unlike \cite{BFP}, since topology knowledge is available during provenance recovery, we expect to eliminate non-existing edges (or double-edges) thereby improving the accuracy of the provenance recovery process. 

Since fixed-size Bloom filters are used by each embedding node, we expect the rate of false-positive events to increase as the network size increases. Here, a false-positive event refers to a scenario when the destination encounters more than one path of a given hop-length when recovering the provenance. Therefore, only using Bloom filters for the provenance recovery process increases the communication-overhead of the protocol, i.e., the value of $m$, to achieve a given FPRs. To circumvent this problem, unlike \cite{BFP}, we supplement the use of Bloom filters by using an in-packet \textit{hash-chain} that assists in path verification on a packet-to-packet basis. With the use of hash functions such as SHA-256, we expect that hash-chain guarantees negligible probability of collision, thereby helping the destination to resolve the candidate paths obtained from the Bloom filter with an overwhelming probability. Furthermore, we observe that to achieve a given FPR, the number of Bloom filter bits can be reduced provided the destination has the complexity to verify a large number of candidate paths. On the other hand, the number of Bloom filter bits can be increased provided the destination is computationally bounded to perform hash-chain verification on a large number of paths. Thus, by using $\beta$ as the affordable number of hash-chain verification by the destination, we observe a trade-off between $m$ and $\beta$ to achieve a given FPR. With the above mentioned trade-off, in the rest of the section, we study the DE and the DDE algorithms along with a hash-chain and then analyze their FPRs as a function of both $m$ and $\beta$.

\subsection{Hash-Chain Based Embedding and Verification}

Suppose the packet originates from a source node with node ID $n_{i_{1}}$, for $i_{1} \in N$. Along with the Bloom filter contents, this node will use a seed value $hc_0 = K_{seed}$ for creating the hash-chain. A globally-known bit sequence could be used as $hc_0$. Assuming the use of a standard hash function $H(.)$ (e.g., SHA-256) for deriving the hash-chain, each embedding node in the path, say node $i_{j}$, updates the hash-chain with the help of following attributes (i) ID of embedding node, denoted by $ID_{i_{j}}$ (this can be edge ID or double-edge ID depending on the embedding technique), (ii) previous hash-chain value $hc_{i_{j-1}}$ (or initial seed value if the packet is from originating node), and the (iii) packet sequence number $seq$. With that, the updated hash-chain value is given by $hc_{i_{j}} = H(seq||ID_{i_{j}}||hc_{i_{j-1}})$. This way, once the packet reaches the destination, it uses the Bloom filter contents and the received hash-chain to recover the provenance. Following a standard hash-chain verification protocol, we note that hash verification is not needed when a single path is recovered; therefore, the destination need not spend extra resources for verifying the hash-chain. Otherwise, the destination will check for the correct path until it either gets the correct path or has used up its $\beta$ chances to find the correct path. If the destination fails to obtain the correct path within $\beta$ attempts, then we refer it to as the false-positive event. By adopting this technique, we observe that the energy consumption at the destination remains within its capability. Overall, the use of hash-chain assisted embedding algorithms (as shown in Fig. \ref{fig:provenance_embedding_process}) helps in ultra-reliability; this is because perfect path recovery is possible with probability one as long as the destination is ready to verify the hash-chains of all the paths in the topology.

\begin{figure}
    \centering
    \includegraphics[height = 6.5cm , width = 15.5cm]{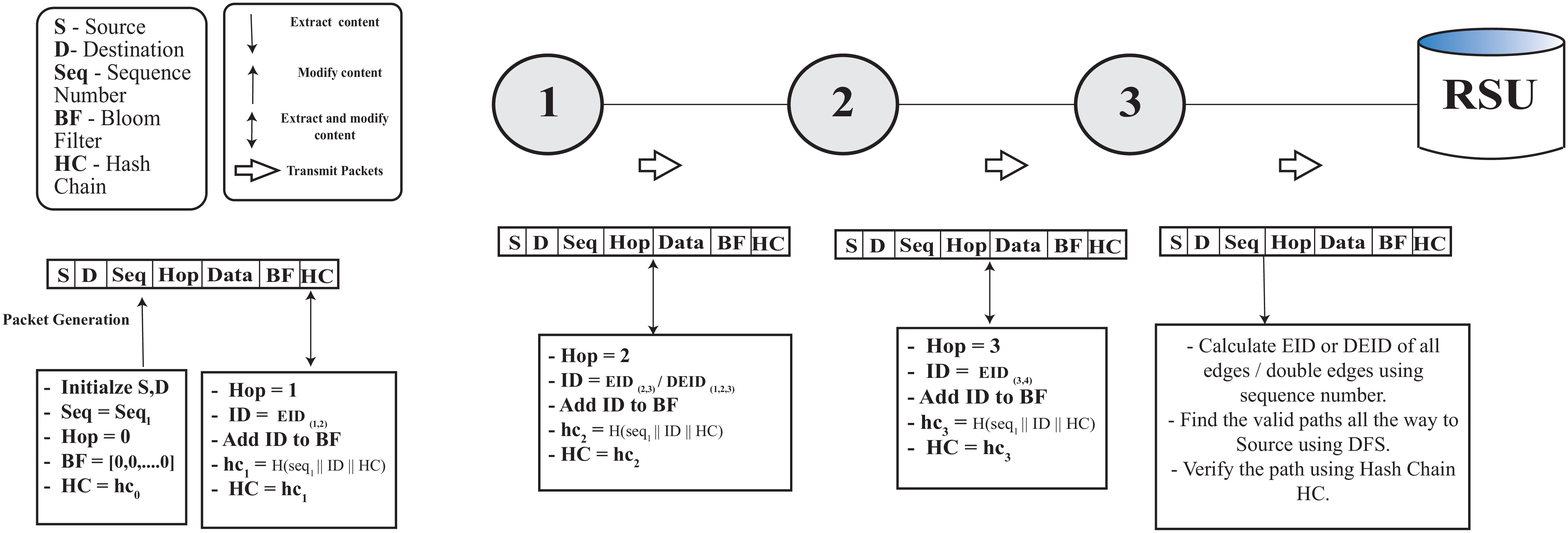}
    \caption{Depiction of the embedding process in payload phase for a three-hop network. In the first-hop under edge embedding, node 1 embeds the edge $(1,2)$ in the Bloom filter, and also generates the hash-chain $\mbox{\textbf{hc}}_{1}$. Similarly, the process of adding identities to the Bloom filter and generating hash-chain are shown in the boxes for each hop.}
    \label{fig:provenance_embedding_process}
\end{figure}

\subsection{Edge Embedding}

In the deterministic edge embedding (DE) method \cite{BFP}, each node embeds the directed edge ID associated with the next node in the chain. As a result, the DE method assists the RSU in recovering the provenance even if the underlying topology contains cycles. In addition, the accuracy of provenance recovery can be further improved owing to the use of hash-chain verification. To execute hash-chain verification along with the DE method, each node uses the same directed edge ID for updating the hash-chain.

\subsubsection{Embedding and Recovery Process}

Suppose that the packet travels the ordered sequence of nodes, denoted by node $i_{1}$, node $i_{2}$, $\ldots$, node $i_{h}$. After following the initialization steps, during packet transmission, node $i_{j}$ with its ID $n_{i_{j}}$, extracts the hop-counter and the Bloom filter $\mathbf{BF} = [BF[0],BF[1],\ldots,BF[m-1]]$ from the preceding node. During packet transmission, node $i_{j}$ generates an edge ID with node $i_{j+1}$, denoted by $EID_{(i_{j},i_{j+1})}$ (as discussed in Section \ref{subsec:SSMP:embedding}), and is then  embedded into $\mathbf{BF}$ as $BF[v_{i_{j}}^{(r)}] = 1$, where $v_{i_{j}}^{(r)} = H_m(EID_{(i_{j},i_{j+1})} || seq || r)$, for $1\leq r \leq k$, such that $H_m(\cdot)$ is a hash function which generates a number in $[0,m-1]$. Once the Bloom filter is updated, node $i_{j}$ also updates the hash-chain as $hc_{i_{j}}=H(seq||EID_{(i_{j},i_{j+1})}||hc_{{i_{j-1}}})$, where $hc_{{i_{j-1}}}$ is the value of hash-chain embedded by the preceding node. Finally, the hop-counter value is incremented by one and the packet is forwarded to the next node.

Meanwhile, at the RSU, provenance recovery process is performed by checking the membership of all the directed edges in the topology. Once the set of edges has been recovered from the Bloom filter, the recovery algorithm will perform a depth-first search (DFS) algorithm for finding valid paths from a given source node of hop length $h$. If the given path has the same hash-chain value as that received from the packet, then the DFS returns the path successfully without proceeding further. Otherwise, the DFS algorithm will backtrack and explore more paths. The DFS algorithm terminates if it either gets the correct path or has visited $\beta$ paths while performing the search. As a consequence, a false-positive event is said to occur if the correct path is not recovered after verifying $\beta$ paths.

 





 \subsection{ Double-Edge Embedding}
 
In this section, we present a variant of the DDE method, which is known to reduce the delay in provenance embedding when compared to the DE method \cite{BFP}. To facilitate low-latency, only alternate relay nodes on the path, embed the information of two edges connected to it; the edge through which the packet is received and the edge over which the packet is forwarded. In addition to these known advantages, our variant further improves its accuracy owing to the topology knowledge as well as the use of hash-chain. 
  
\subsubsection{Embedding and Recovery Process}

Similar to the DE method, we suppose that the packet travels the ordered sequence of nodes, denoted by node $i_{1}$, node $i_{2}$, $\ldots$, node $i_{h}$. During packet transmission, node $i_{j}$ extracts the hop-counter value from the packet. Since provenance embedding is performed by alternate nodes, node $i_{j}$ updates the Bloom filter and the hash-chain depending on whether hop-counter value is even or odd \cite{BFP}. If node $i_{j}$ has to update the provenance, it generates a double-edge ID with node $i_{j-1}$ and node $i_{j+1}$, denoted by $DEID_{(i_{j-1}, i_{j},i_{j+1})}$.\footnote{The procedure to generate $DEID_{(i_{j-1}, i_{j},i_{j+1})}$ can be similar to generating $EID_{(i_{j},i_{j+1})}$, wherein the identity of the double-edge can be synthesized as $Enc_{K_{i}}(n_{i_{j-1}} || n_{i_{j}} || n_{i_{j+1}})$.} Subsequently, the rest of the embedding process as described in the DE method will be followed. However, the only exception is that instead of using the edge ID $EID_{i_{j}, i_{j+1}}$, the double-edge ID $DEID_{(i_{j-1}, i_{j},i_{j+1})}$ will be used to update the Bloom filter and the hash-chain. Finally, the hop-counter is incremented by one before forwarding the packet to the next node.

Upon receiving the packet, the RSU uses all the double-edge IDs of the topology to verify their membership in the Bloom filter. Note that the set of double-edges recovered from the Bloom filter may contain more than $\lceil \frac{h}{2} \rceil$ double-edges owing to hash collisions. Similar to the DE technique, the destination recovers the path traced by the packet using a DFS algorithm on the recovered double-edges. As a consequence, we define a false-positive event, in which despite checking all the $\beta$ paths using the hash-chain, the RSU fails to recover the provenance from the packet.

\subsection{Optimization of Bloom filter parameters}

Having defined the false-positive events for the DE and DDE methods, we derive analytical expressions on their FPRs so that the expressions can be used to pick the right choice of $m$ and $k$ for a given affordable complexity at the destination (quantified by $\beta$). In contrast to \cite{BFP}, our approach uses topology knowledge as well as hash-chains when deriving the expressions.

\subsubsection{False-Positive Rates of DE and DDE Methods}

From the DE method, we define its FPR as $P_{\mathcal{E}}(fp) =\frac{N_{fail}}{N_{total}}
$, where $N_{fail}$ denotes the total number of times the RSU is unable to recover the provenance despite getting $\beta$ chances to verify the hash-chains, and $N_{total}$ is the total number packets transmitted. Similarly, the FPR of the DDE method, denoted by $P_{\mathcal{DE}}(fp)$, can be defined. In the following theorem, we derive analytical expressions for upper bounds on $P_{\mathcal{E}}(fp)$ and $P_{\mathcal{DE}}(fp)$ as a function of $\beta$, $k$ and $m$. 

\begin{theorem}
\label{thm_DEE_PFA}
Given $G(N,E)$, $m$, $k$, and $\beta$, an upper bound on FPRs of the DE and DDE methods can be obtained in closed-form.
\end{theorem}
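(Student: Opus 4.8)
The plan is to upper-bound the failure probability by the probability that the number of \emph{spurious} candidate paths exceeds the verification budget $\beta$, and then to control that quantity through its expectation. First I would characterize the false-positive event precisely. Because the correct path always sets its own edge (for DE) or double-edge (for DDE) identities in the Bloom filter, it is always a valid candidate, and---barring a hash-chain collision, which is negligible under SHA-256---it is the unique candidate whose recomputed hash-chain matches the received value. Consequently, if the total number of length-$h$ candidate paths surviving the Bloom filter membership test is at most $\beta$, the destination necessarily verifies the correct path within its $\beta$ attempts and no failure occurs. The failure event therefore implies that at least $\beta$ spurious paths (paths other than the true one) pass the membership test, so that $P_{\mathcal{E}}(fp) \le P(X \ge \beta)$, where $X$ denotes the number of surviving spurious paths.

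Next I would compute $E[X]$ by exploiting the known topology. Let $\mathcal{P}_h$ be the set of length-$h$ paths from the source to the RSU in $G(N,E)$, excluding the true path; enumerating these is exactly where topology knowledge pays off, since only genuine edges of $G(N,E)$---not all edges of the complete graph, as in \cite{BFP}---can appear in a candidate. For a path $P \in \mathcal{P}_h$, every edge it shares with the true path is verified for free, whereas each of its remaining $d_P$ edges (real topology edges that were not embedded for this packet) survives only as a Bloom filter false positive. Conditioning on the event $S_j$ that exactly $j$ bits of the filter are set by the $h$ embedded edges---whose distribution is given in closed form by the Stirling-number expression of Theorem \ref{thm:SSMP_exact} with $hk$ insertions---the $k$ hash outputs of each unembedded edge independently land among the $j$ set bits with probability $(j/m)^k$, and since distinct edges use distinct identities these events factor across edges, giving $P(P \text{ survives} \mid S_j) = (j/m)^{k d_P}$. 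Linearity of expectation then yields
\begin{equation*}
E[X] = \sum_{P \in \mathcal{P}_h}\ \sum_{j} \left(\frac{j}{m}\right)^{k d_P} P(S_j),
\end{equation*}
and Markov's inequality delivers the closed-form bound $P_{\mathcal{E}}(fp) \le E[X]/\beta$, which transparently exhibits the $m$--$\beta$ trade-off the paper emphasizes. The DDE bound on $P_{\mathcal{DE}}(fp)$ is obtained identically, replacing edges by double-edges, the insertion count $h$ by $\lceil h/2 \rceil$, and $\mathcal{P}_h$ by the set of paths consistent with the double-edge structure.

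I expect the main obstacle to be the topology-dependent path enumeration, i.e., organizing $\mathcal{P}_h$ by the deviation count $d_P$ so that $\sum_{P}(j/m)^{k d_P}$ becomes an explicit, computable function of $G(N,E)$, $m$, $k$, and $h$; a depth-first search (or adjacency-matrix bookkeeping) that tags each discovered length-$h$ path with its number of ``new'' edges $d_P$ should make this tractable since the topology is known. A secondary point requiring care is justifying that conditioning on $S_j$ renders the per-edge survival events independent across distinct edges, which rests on the same modeling assumption of independent hash indices that underlies Theorem \ref{thm:SSMP_exact}. Since the only genuine inequalities invoked are the implication ``failure implies $X \ge \beta$'' and Markov's inequality, the resulting expression is a legitimate closed-form upper bound on the FPRs of both the DE and DDE methods.
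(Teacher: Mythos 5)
Your proposal is correct and does produce a legitimate closed-form upper bound, but it takes a genuinely different route from the paper's. Both arguments begin with the same reduction---failure implies that at least $\beta$ spurious length-$h$ candidate paths survive the Bloom filter membership test---but you then control $P(X \geq \beta)$ through the first moment via Markov's inequality, $P(X \geq \beta) \leq E[X]/\beta$, whereas the paper applies a union bound over all $\binom{\lambda-1}{\beta}$ subsets of $\beta$ spurious paths and evaluates the joint survival probability of each subset as $p^{|E_i|}$, where $|E_i|$ counts the distinct off-path edges (or double-edges) in the union of that subset. The paper's bound is in effect a $\beta$-th binomial-moment bound $E\bigl[\binom{X}{\beta}\bigr]$: each additional path a subset must contain contributes further factors of the small per-edge collision probability $p$, so the bound decays rapidly in $\beta$ and is what allows the analytical curves to track the simulated FPRs for $\beta = 2, 3$ in Figs.~\ref{fig:edge_embedding_sim_analytical} and \ref{fig:double_edge_embedding_sim_analytical}. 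Your Markov bound improves only by the factor $1/\beta$, and would therefore badly understate the benefit of the hash-chain verification budget that is the central trade-off of this section---this is the main thing your approach gives up. What you gain is computational economy (a single sum over paths tagged by their deviation counts $d_P$, versus $\binom{\lambda-1}{\beta}$ subset terms, each requiring a set union) and, incidentally, a slightly more careful treatment of correlations: your factorization $\sum_j (j/m)^{k d_P} P(S_j)$ keeps the conditioning on the number of set bits throughout, whereas the paper's $p^{|E_i|}$ multiplies unconditional marginals and so leans harder on the independence heuristic. For $\beta = 1$ the two bounds essentially coincide.
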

\begin{proof}
We refer the reader to Proof \ref{prf_DEE_PFA} in the appendix section.
\end{proof}
\comment{
\begin{proof}
Let $G(N,E)$ be such that there exists a total of $\lambda$ distinct paths of hop-length $h$ from a given source to the destination. Since the destination is capable of verifying at most $\beta$ paths, a false-positive event can occur when more than $\beta$ paths are recovered from the Bloom filter. Conditioned on a given path travelled by the packet, let the corresponding set of edges (or double-edges) be represented by $E_{actual}$ (or $DE_{actual}$). Excluding the path chosen by the packet, a false-positive event can occur if at least $\beta$ paths out of the remaining $\lambda -1$ paths appear in the Bloom filter. To count such events, there are ${\lambda-1}\choose{\beta}$ distinct ways denoted by $\{C_1,C_2,\ldots,C_{{\lambda-1}\choose{\beta}}\}$, and for each combination the $\beta$ distinct paths are represented by $\{P_{i1},P_{i2},P_{i3},\ldots,P_{i\beta}\}$,
for $1 \leq i \leq$ $ {\lambda-1}\choose{\beta}$. With this, using the union bound, we can write $P_{\mathcal{W}}(fp) < \sum_{i = 1}^{{\lambda-1}\choose{\beta}} P(X[C_{i}])$, where $\mathcal{W} \in \{\mathcal{E}, \mathcal{DE}\}$, $P(X[C_{i}])$ is the probability of occurrence of $\beta$ paths in $C_{i}$, captured by the event $X[C_{i}]$. Towards computing $P(X[C_{i}])$, we have ${X[C_{i}]=\bigcap_{j=1}^{\beta} X[P_{ij}]}$, where $X[P_{ij}]$ denotes the event wherein the edges/double-edges of the path $P_{ij} = n^{(1)}_{ij} \rightarrow n^{(2)}_{ij} \rightarrow \ldots n^{(h-1)}_{ij} \rightarrow n^{(h)}_{ij}$ are recovered from the Bloom filter. The corresponding set of edges and double-edges are $\{(n^{(1)}_{ij},n^{(2)}_{ij}),(n^{(2)}_{ij},n^{(3)}_{ij}), \ldots \}$ and $\{(n^{(1)}_{ij},n^{(2)}_{ij}, n^{(3)}_{ij}),(n^{(3)}_{ij},n^{(4)}_{ij}, n^{(5)}_{ij}), \ldots \}$. With such edges/double-edges, there are two possibilities: \textbf{Case 1}: Some edges/double-edges of $P_{ij}$ may belong to the path travelled by the packet, and as a result, those edges/double-edges will always be recovered with probability one. \textbf{Case 2}: Some edge/double-edges of $P_{ij}$ do not belong to the path travelled by the packet, and as a result, the probability of recovering such an edge/double-edge is
 \begin{equation*}
p = \sum_{i=1}^{min(k \theta,m)}\frac{i^{k}{ {m} \choose i}\sum_{j=0}^{i}(-1)^{j} {{i}\choose{j}}(i-j)^{\theta k}}{m^{(\theta+1)k}},
\end{equation*}
wherein we substitute $\theta = h$ and $\theta = \lfloor{\frac{h}{2}}\rfloor$ for the DE and DDE methods, respectively.
With the DE method, let $E_{ij}$ represent the set of distinct edges of the path $P_{ij}$ that are not present in $E_{actual}$. Similarly, with the DDE method, let $DE_{ij}$ represent the set of distinct double-edges of the path $P_{ij}$ that are not present in $DE_{actual}$. Thus, by considering all such edges/double-edges under the $X[C_{i}]$, we get $E_i=\bigcup_{j=1}^{\beta}E_{ij}
$ and $DE_i=\bigcup_{j=1}^{\beta}DE_{ij}$. Since the event of hash-collision of each edge/double-edge are identical and statistically independent, we can write $P(X[C_i])=p^{|E_i|}$ and $P(X[C_i])=p^{|DE_i|}$ for the DE and the DDE method, respectively. Hence, an upper bound on the false positive rate for the DE method is given by $\overline{P}_{\mathcal{E}}(fp)= \sum_{i=1}^{{\lambda-1}\choose{\beta}}P(X[C_i])$,
which is equal to $\sum_{i=1}^{{\lambda-1}\choose{\beta}}p^{|E_i|}
$. Similarly, in the case of DDE method, an upper bound on the false positive rate is given by
$\overline{P}_{\mathcal{DE}}(fp)=\sum_{i=1}^{{\lambda-1}\choose{\beta}}p^{|DE_i|}
$. This completes the proof.
\end{proof}
}

\subsection{Comparison of Embedding Techniques}

The main objectives of this section are (i) to compare the upper bounds on the FPRs (from Theorem \ref{thm_DEE_PFA}) with that of the simulation results, and (ii) to present the improvements in the FPRs when compared to that of no topology knowledge during provenance recovery \cite{BFP}. To generate the simulation results, we use a network of $n = 20$ nodes with two topologies: one with 54 edges and the other with 34 edges. A comparison between the proposed upper bounds on the DE and the DDE methods and their simulation results are presented in Fig. \ref{fig:edge_embedding_sim_analytical} and Fig. \ref{fig:double_edge_embedding_sim_analytical}, respectively, for different values of $\beta$. Both plots confirm that the FPRs improve with increased complexity capability at the destination, and moreover, the value of $k$ for which the proposed closed-form expressions achieve the minima is close to that provided by simulation results. As a result, these expressions can be used to choose the right values of $k$ for a given $m$ and $\beta$.

We also present simulation results to analyze the advantage of learning the topology for the provenance recovery process. In particular, we compare the FPRs of the DE and the DDE methods with and without the topology knowledge. To generate the results, we use a network of $n = 20$ nodes with both 54 edges and 34 edges. We use $\beta = 1$, thereby not using hash-chains to recover the provenance. The plots, which are presented in Fig. \ref{fig:compare_knowledge}, confirm that with the knowledge of the topology, the FPRs improves substantially, and moreover, the order of improvement increases as the sparsity increases.

\begin{figure}
    \centering
    \includegraphics[trim={0 0 0 6cm},clip,scale=0.2]{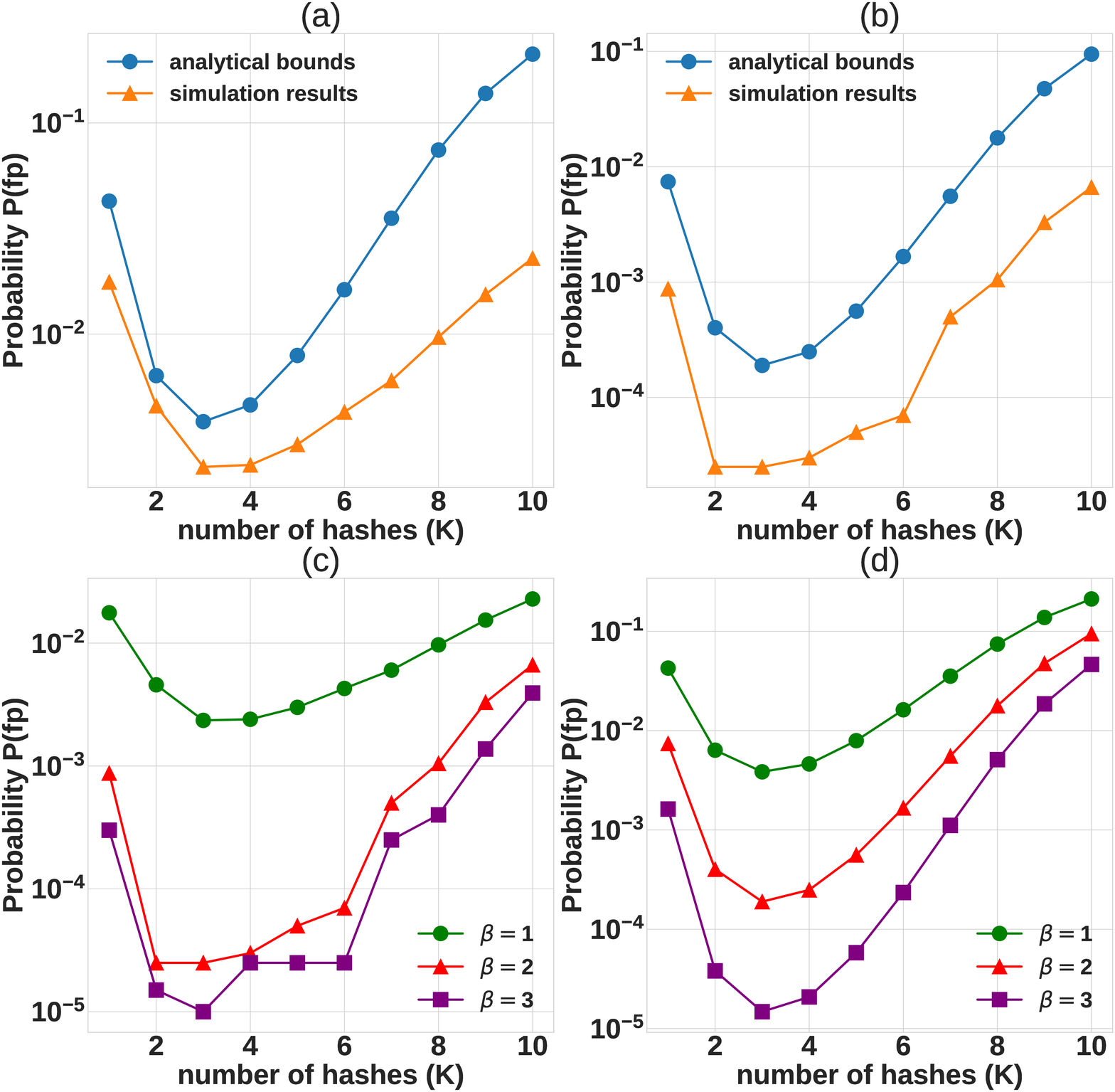}
    \vspace{-1cm}
    \caption{(a) Comparison between the analytical bound and the simulation results of DE with $\beta=1$ and $m = 20$ on a network with $n = 20$. (b) Setting as as in (a) with $\beta=2$. (c) Simulation results on FPRs with $\beta=1,2,3$ for $m = 20$ bits. (d) Analytical bounds on FPRs for $\beta=1,2,3$ for $m = 20$ bits.}
    \label{fig:edge_embedding_sim_analytical}
\end{figure}

\begin{figure}
    \centering
    \includegraphics[trim={0 0 0 6cm},clip,scale=0.2]{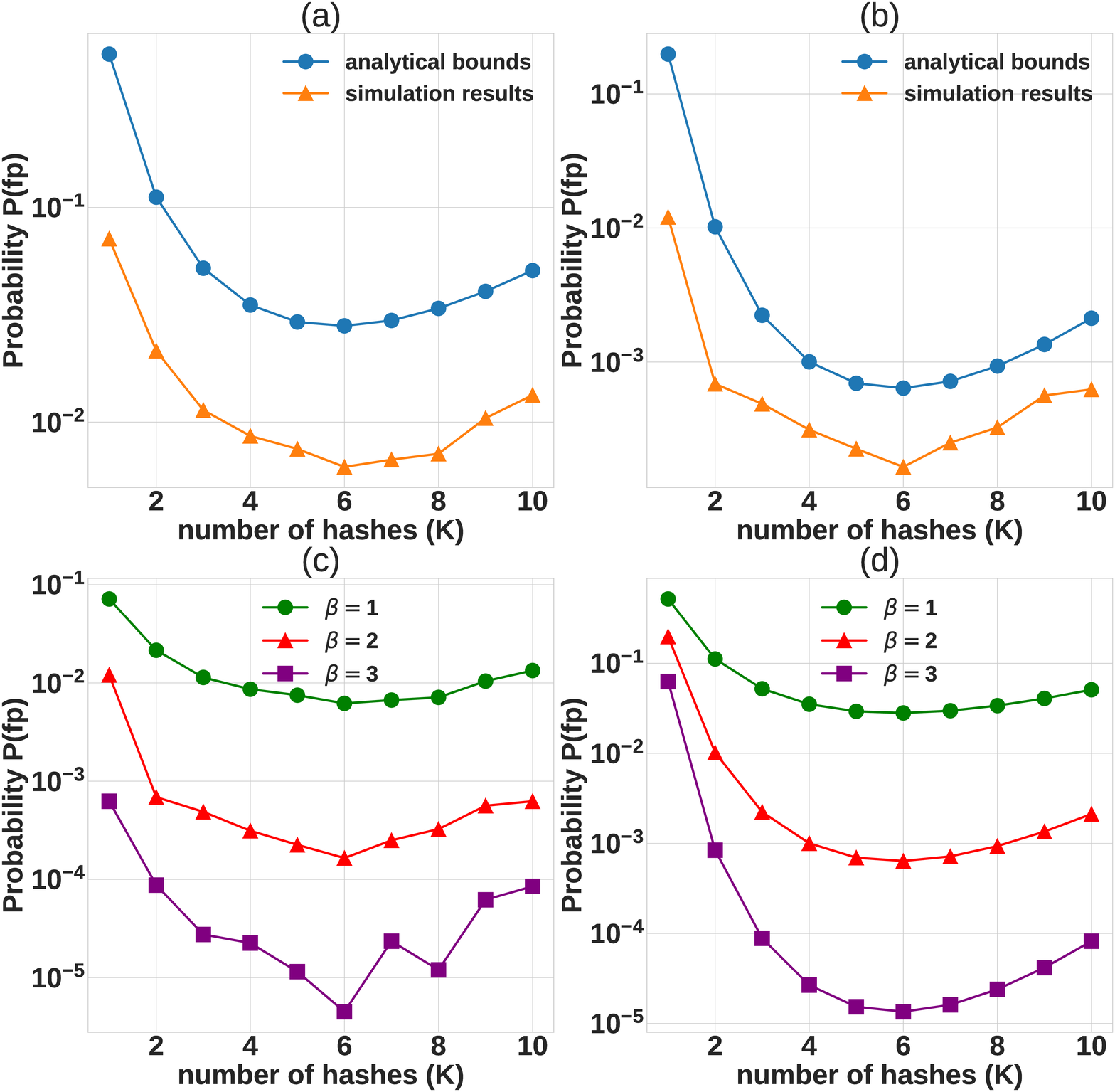}
    \vspace{-1.3cm}
    \caption{(a) Comparison between the analytical bound and the simulation results of DDE with $\beta=1$ and $m=20$ on a network of $n = 20$ nodes. (b) Setting same as in (a), however, with $\beta=2$. (c) Simulation results of FPRs for $\beta=1,2,3$ for $10^5$ packets with $m = 20$. (d) Analytical bounds on FPRs for $\beta=1,2,3$ with $m = 20$.}
    \label{fig:double_edge_embedding_sim_analytical}
\end{figure}

\begin{figure}
    \centering
    \includegraphics[trim={0 0 0 6cm},clip,scale=0.2]{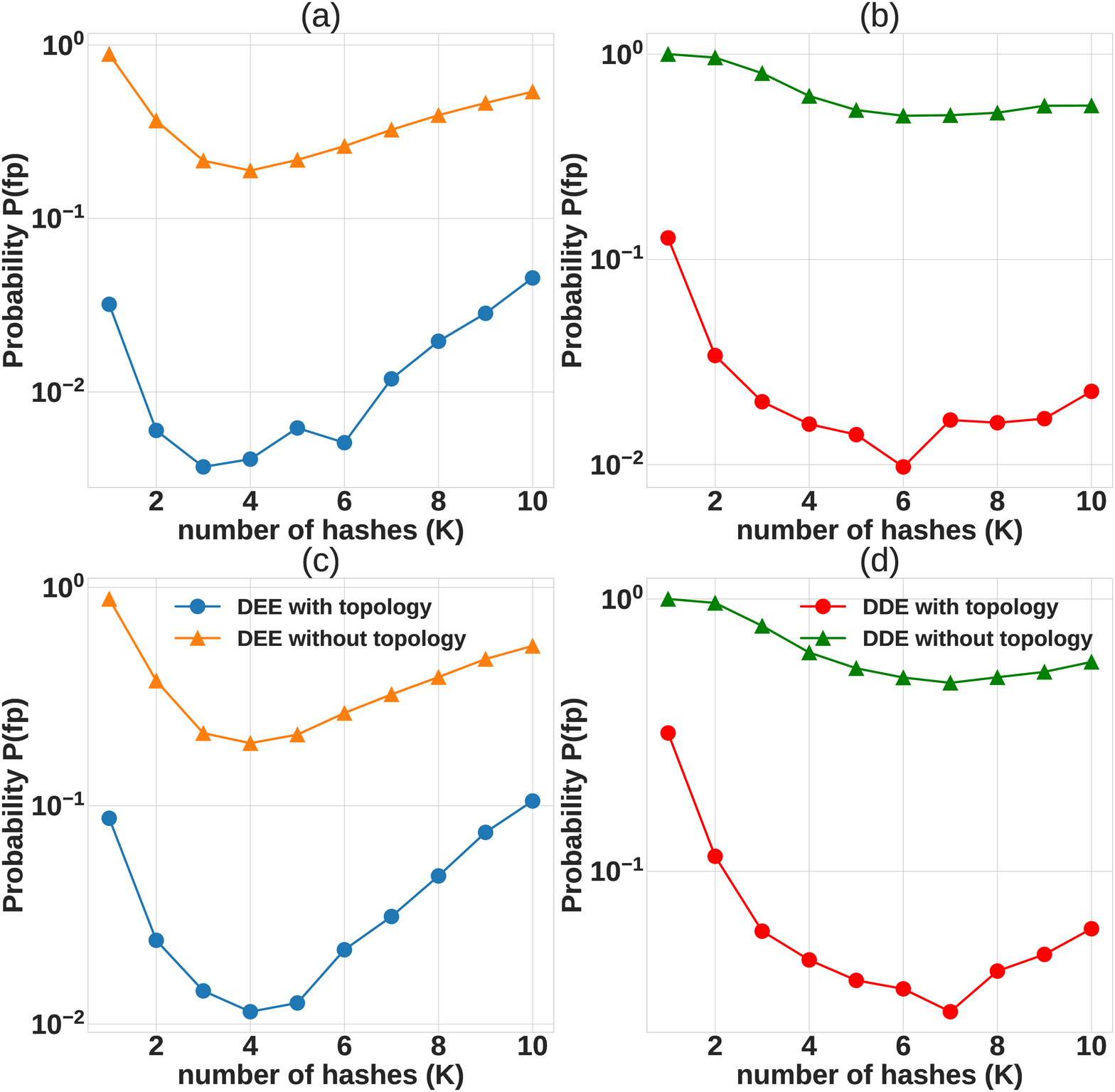}
    \vspace{-1.2cm}
    \caption{(a) Simulation results on DE with and without topology knowledge using $\beta=1$ and $m = 20$ on a network of $n = 20$ nodes containing 34 edges. (b) DDE method on the same setting as in (a). (c) Simulation results on DE with a graph containing $n = 20$ nodes and $54$ edges. (d) DDE on the same setting as in (c).}
    \label{fig:compare_knowledge}
\end{figure}

\section{Complexity Analysis and Simulation Results on OMNeT++}

In this section, first, we present a thorough analysis on the computation/communication overheads of the proposed schemes (both topology learning phase and payload phase). Subsequently, we implement the proposed schemes on OMNeT++ to measure these overheads in terms of the time taken to execute their functionality.


\subsection{Complexity Analysis}

With respect to the topology learning phase, the time complexity of our protocols can be analysed based on the following parameters: \textbf{(i)} Processing at each node, \textbf{(ii)} Queuing, forwarding and routing in network, and \textbf{(iii)} Processing at RSU for learning the topology. Therefore, by including the time components for various operations (as defined in Table \ref{tab:time_def_different_operations}), we compute the overall time taken by the proposed protocols.

\subsubsection{SSMP and MSSP Protocols}

From the definition of the SSMP protocol, the time taken for embedding the neighbour information at node $i$ will be  $t_{h, N}k\gamma_i$ (or $O(k\gamma_i)$). With $h_i$ denoting the number of hops taken by the packet from node $i$ to reach the RSU, the average time taken for propagation is given by $h_i(t_{q,N} + t_{pr})$ seconds (or $O(h_i)$). Subsequently, the RSU will check the membership of all the edges with node $i$, which will take $t_{q,R}+t_{h,R}n$ seconds. Finally, a mutual confirmation will be triggered when the packets from all the nodes have been processed by the RSU. This confirmation will take $O(n^2)$ comparison operations on the adjacency matrix (assuming all the nodes start the operation of embedding the neighbours simultaneously). Thus, for the packet of node $i$, we have: (i) Processing time at the embedding node is $t_{h,N}k\gamma_i$ (or $O(k\gamma_i)$), (ii) Queuing, forwarding and routing time is $h_i(t_{q,N}+t_{pr})$ (or $O(h_i)$), and finally, (iii) Processing time at RSU is $t_{q,R}+t_{h,R}n$. Therefore the overall time taken by the packet of $n_i$ is $t_{h,N}k\gamma_i+h_i(t_{q,N}+t_{pr})+t_{q,R}+t_{h,R}nk$. Although this expression provides the time taken for registering the edges of one node, the total time taken by all the nodes is less than their sum since the SSMP protocol allows parallelism both during packet routing as well as the edge verification process. This aspect will be clear when we present the simulation results using OMNeT++ environment.

For the MSSP protocol, the time taken at each node for embedding their neighbours is $t_{h,N}k\gamma_i$ (or $O(k\gamma_i)$). Also, there will be no queuing required for any node because only one packet will be traversing all the nodes. Assuming the maximum number of edges through which the packet passes before reaching the RSU to be $h_{max}$, the propagation time will take $h_{max} t_{pr}$. Meanwhile, at the RSU, the total number of hash calculations is $2 {{n} \choose {2}}k$, which approximately takes $2 {{n} \choose {2}}kt_{h,R}$ seconds when executed sequentially. Thus, (i) the processing time at the nodes is $t_{h,N}k\gamma_i$ (or $O(k\gamma_i)$), (ii) forwarding and routing time is $h_{max}t_{pr}$ (or $O(n)$), and finally, (iii) processing time at RSU is $t_{h,R}(2{n \choose{2}}k)$. Therefore, the total time taken by the MSSP protocol is:
$t_{h,N}k\sum_{i=1}^{n-1}\gamma_i+h_{max}t_{pr}+t_{h,R}(2{n \choose{2}}k) 
$.

\begin{table}
\caption{Time taken for different operations during the protocol}

    \centering
    \begin{tabular}{|c|c|}
        \hline
         \textbf{Notation}& \textbf{Description} \\
         \hline
         $t_{h,N}$& Time taken for calculating the hash at nodes.\\
         \hline
         $t_{h,R}$ & Time taken for calculating the hash at RSU. \\
         \hline
         $t_{q,N}$ & Average time taken for processing in queue at node\\
         \hline
         $t_{q,R}$ & Average time taken for processing in queue at RSU\\
         \hline
         $t_{pr}$ & Propagation time \\
         \hline
    \end{tabular}
    \label{tab:time_def_different_operations}
\end{table}

\subsubsection{Provenance Embedding during Payload Phase}
In the payload phase, the total time taken for packet transmission depends upon the embedding technique used for adding the details of the path through which the packet has traversed. Additionally, since a hash is being calculated for hash-chain embedding, the total number of hash calculations will be one more than the number of hashes used to embed the Bloom filter. Therefore, the total time taken for packet transmission for the edge embedding and double-edge embedding schemes are $h((k+1)t_{h,N}+t_{p})$, and $\lfloor\frac{h}{2}\rfloor((k+1)t_{h,N}+t_{p})$, respectively. Since the delay benefits of the DDE method is already known during packet routing \cite{BFP}, we only focus on the delay benefits of the topology knowledge during the provenance recovery process.

With respect to the provenance recovery process, the recovery time depends on the embedding method. In particular, the number of hash calculations to recover the edges and the double-edges at the RSU will be $(2{n \choose{2}}-(n-1))k$ and  $(6{n\choose{3}}-2(n-1)(n-2))k$, respectively. Hence, the time taken for processing the hash calculations for edge embedding is $(2{n \choose{2}}-(n-1))kt_{h,R}$ seconds, and for double-edge embedding, it will be $(6{n\choose{3}}-2(n-1)(n-2))kt_{h,R}$ seconds. Finally, once the required information on the edges and double-edges are recovered, the DFS algorithm is invoked. In particular, the DFS algorithm will terminate as soon as the hash of the path of hop-length $h$ is matched with the received hash. In the worst case, if the received hash value is not matched, then DFS algorithm will run only for $\beta +1$ times. Therefore, the total time taken for hash-chain computation will be less than  $\beta h t_{h,R}$ seconds for edge embedding, and $\beta \lfloor \frac{h}{2} \rfloor t_{h,R}$ seconds for double-edge embedding. Thus, the overall time taken for edge and double-edge embedding will be $h((k+1)t_{h,N}+t_{p})+(2{n \choose{2}}-(n-1))kt_{h,R}+\beta h t_{h,R}$ seconds and $\lfloor\frac{h}{2}\rfloor( (k+1)t_{h,N}+t_{p})+(6{n\choose{3}}-2(n-1)(n-2))kt_{h(R)}+\beta\lfloor\frac{h}{2}\rfloor t_{h,R}$ seconds, respectively.

For the case when the topology knowledge is known at the RSU, the transmission time for the edge and double-edge embedding methods will remain the same as that without the topology knowledge at the RSU. However, the number of hashes that are required for the verification of the edges (or the double-edges) will be $(|E|)k$ (or $(|DE|)k$), where $E$ and $DE$ are the set of edges and the double edges in the topology. Thus, in the case of sparse topology, the order of $E$ and $DE$ are small, i.e., $|E|<<(2{n \choose{2}}-(n-1))$ and $|DE|<<(6{n\choose{3}}-2(n-1)(n-2))$, thereby significantly reducing the time for recovering the edges/double edges.

\subsection{Simulation Results on OMNeT++}
\begin{figure}
    \centering
    \includegraphics[trim={0.1cm 0.2cm 0.1cm 0.1cm},clip,scale=0.35]{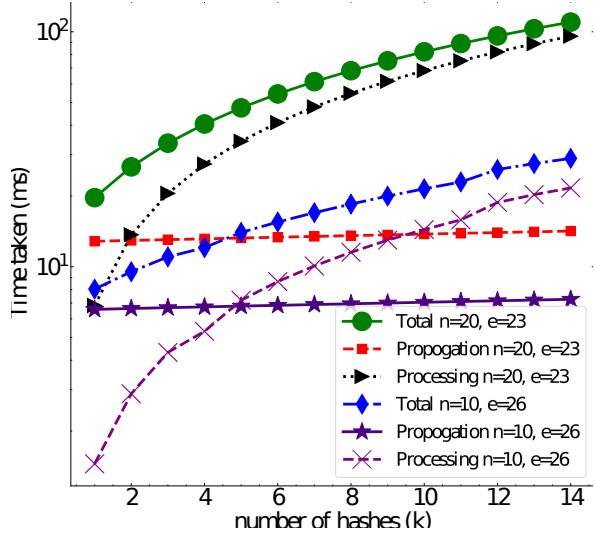}
    \caption{Time taken by various parts of the MSSP Protocol when implemented on OMNeT++.}
    \label{fig:MSSP_comb}
\end{figure}
In this section, we present the communication/computation overheads of the proposed schemes after implementing them on the OMNeT++ environment. To generate the results, simulations were carried out on \textbf{OMNeT++}  simulator (version \textit{5.6.2}) with system configuration having Intel Core-i5 processor and 8 GB RAM.  We used the INET framework for the simulation of an ad-hoc wireless network that uses AODV routing.  The simulations involved (i) a topology with $n = 10$ nodes and $e = 26$ edges, (ii) a topology with $n = 20$ and  $e = 23$ edges, and (iii) a topology with $n = 20$ and $e = 34$ edges. Furthermore, for the simulations, we have considered a transmission delay of $0.5 ms$ between two nodes (assuming the use of a packet equivalent to one slot in LTE standard), processing time for computing hash function once as $42 \mu s$ and $10 \mu s$ at nodes and RSU, respectively (this is to capture difference in processing complexity at the nodes and the RSU). Moreover, only for the SSMP protocol, we have considered a queue processing time as $70 \mu s$, as in SSMP, queue is required because a node may be processing a packet when the packet is received from other nodes. 

\begin{figure}

    \centering
    \includegraphics[trim={0.1cm 0.2cm 0.1cm 0.1cm},clip,scale=0.35]{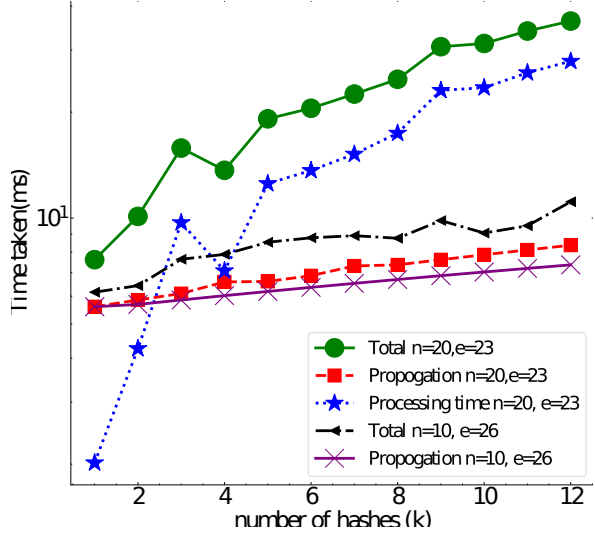}
    \caption{Time taken by various parts of the SSMP Protocol when implemented on OMNeT++.}
    \label{fig:SSMP_comb}
\end{figure}

Using the delay parameters discussed above, we plot the time taken by various blocks of the MSSP and the SSMP algorithm in Fig. \ref{fig:MSSP_comb} and Fig. \ref{fig:SSMP_comb}, respectively. In particular, the following terms are used in the plots: (i) Total time - it is the sum of propagation time, processing time at each node, and the processing time at RSU to learn the topology, (ii) Processing time - it only includes the time taken by the RSU to learn the topology, and (iii) Propagation time - it is the total time taken by all the nodes in embedding the packet and forwarding it to the next node. From Fig. \ref{fig:MSSP_comb} (for the MSSP protocol) and Fig. \ref{fig:SSMP_comb} (for the SSMP protocol), we can see that the total time taken by the SSMP algorithm is lower than that of the MSSP algorithm, and this observation is attributed to the fact that the former algorithm is able to achieve parallelism as RSU can verify the edges as and when packets arrive, and moreover, the packets from multiple sources can simultaneously flow through the network. We also observe that dominant portion of the total time is the time taken by the processing operation at the RSU. Moreover, when comparing the propagation delay between the two protocols in Fig. \ref{fig:Comp_prop}, we observe that for the MSSP protocol, the propagation delay decreases with the increasing network density (increase in the number of edges), and this reduction is because of fewer number of extra transmissions in order to reach unvisited nodes. We also observe that the overall propagation delay of the SSMP protocol is lower than the MSSP protocol owing to parallelism achieved by simultaneous flow of packets from multiple sources. 

\begin{figure}

\centering
\includegraphics[trim={0.1cm 0.2cm 0.1cm 1cm},clip,scale=0.35]{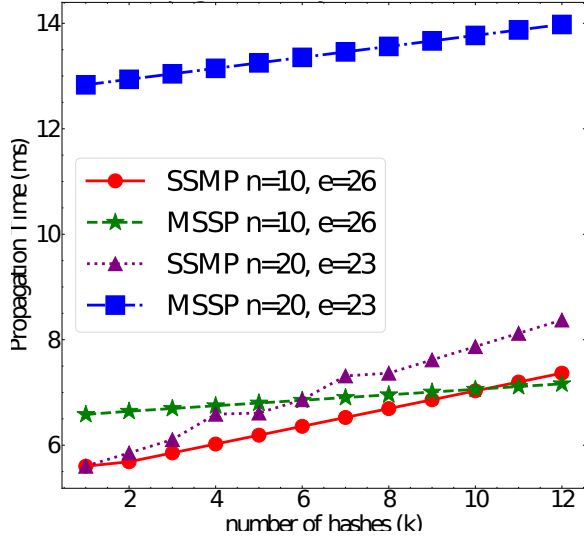}
\caption{Comparison of propagation delay between the SSMP and the MSSP algorithms}
\label{fig:Comp_prop}
\end{figure}

Finally, in Fig. \ref{fig:withAndw_oDEE}, we compare the delay incurred for executing provenance recovery with and without using the knowledge of the topology. As far as the benefits of the topology knowledge is concerned, we observe that fewer the number of edges (double-edges) in the topology, lower is the time taken for the provenance verification process at the RSU. These simulation results confirm the advantages of our protocols over the contributions of \cite{BFP}.




In conclusion, the above OMNeT++ based simulation results suggest that the SSMP algorithm must be preferred because of its low end-to-end delay for the topology learning phase. Furthermore, the results also suggest that a sparse topology must be learned and then used in the payload phase in order to further improve the FPRs and the end-to-end delay during the provenance recovery phase.

\section{Conclusion}
Identifying the challenges of latency constraints and changing topology in sparse vehicular networks, we have proposed novel strategies for ultra-reliable topology learning and provenance recovery. Besides presenting a rigorous analysis on the FPRs of the topology learning phase and the payload phase, we have also demonstrated our ideas by implementing them on OMNeT++ environment to confirm the latency benefits. Since we address the joint problem of topology learning and its subsequent use in provenance recovery, we believe that the following tasks: (i) collection of number of neighbours from each node, (ii) solving Problem \ref{problem:2}, and then (iii) executing the topology learning phase, must be completed within a small fraction of the coherence time so that the rest of the fraction can be used for the payload phase and provenance recovery. As far as the decision of whether to learn the topology is concerned,  the RSU must receive the information on the number of neighbours of each node, and then determine whether the topology is dense or sparse, and then execute the topology learning phase if the network is sparse. Although this work does not address questions related to the choice of the threshold, we advocate to keep the threshold on the number of edges in the network as small as possible since an extremely-sparse topology incurs low communication-overhead for the topology learning phase, and also provides significant benefits in time complexity and FPRs for the payload phase.

In this work, we have used the first part of the coherence time to learn the topology of the network with no prior knowledge of the network. However, in practice, once the first batch of topology learning is completed, we expect the topology to vary over time in a gradual manner. As a result, an interesting future work is to develop secure \emph{topology-tracking} algorithms. 

\begin{figure}
    \centering
    \includegraphics[trim={0.1cm 0.2cm 0.1cm 0.1cm},clip,scale=0.35]{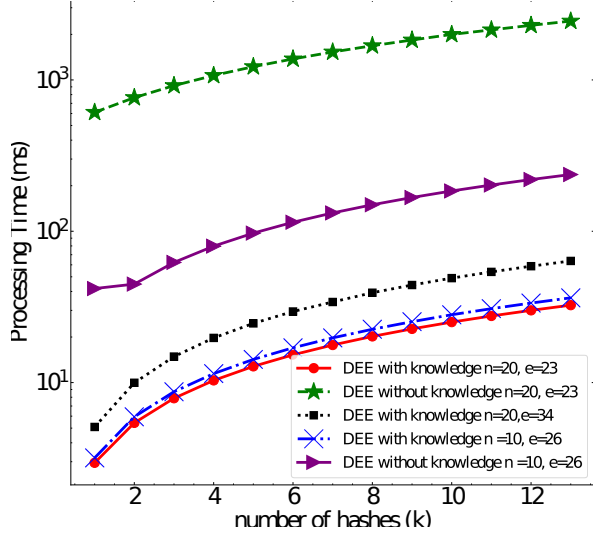}
    \caption{Plots depicting the reduction in processing delay for the provenance recovery process when using the knowledge of the topology.}
    \label{fig:withAndw_oDEE}
\end{figure}




\section*{Acknowledgments}

This work was supported by the Indigenous 5G Test Bed project from the Department of Telecommunications, Ministry of Communications, New Delhi, India.

\section{Appendix}

\begin{proof1}
\label{prf:SSMP_exact}
   [\textbf{Proof for Theorem \ref{thm:SSMP_exact}}] We note that \textbf{1)} The false-positive events are caused due to the occurrence of edges that are not in the topology. \textbf{2)} The edges that are either originating from or terminating at the destination are not considered for the false-positive event as the destination already has the knowledge of its neighbours through the neighbour-discovery process. To capture the above points when defining the false-positive events, let us consider the complementary graph of the given graph and also remove from it the complementary edges that are connected to the destination. Let the set of all edges in the complete graph be denoted by $E_c$, defined as $E_c=\{(i,j)~|~\forall i, j \in N \mbox{ such that } i \neq j\}$. Also, let the set $E_d$ denote the set of edges that are connected to the destination. With that the complementary graph excluding the edges of destination is given by $\overline{G}(N,\overline{E})$, where $\overline{E} \triangleq E_c-E_d-E$ such that $-$ represents the set difference operator. For a false-positive event to occur, at least one edge from $\overline{E}$ must be recovered during the topology learning phase. Formally, for two nodes $x, y \in N$, let $E_{x \leftrightarrow y}$ denote the event when the edge $(x, y) \in \overline{E}$ is recovered in the learning phase. Based on the recovery algorithm, the event $E_{x \leftrightarrow y}$ is defined as
    \begin{equation}\label{eqn:2}
     E_{x \leftrightarrow y}=E_{x \rightarrow y} \land E_{y \rightarrow x},
  \end{equation}
 where $E_{x \rightarrow y}$ and $E_{y \rightarrow x}$ are the events that edge $(x,y)$ and edge $(y,x)$ have been recovered from the Bloom filter shared by node $x$ and node $y$, respectively. Thus, considering the recovery of any edge in $\overline{E}$, the false-positive event is written as $fp = \bigcup_{(a,b) \in \overline{E}} E_{a \leftrightarrow b}.$ As a consequence, the FPR of the SSMP scheme is
     $P(fp)= P\biggl(\bigcup_{(a,b) \in \overline{E}} E_{a \leftrightarrow b}\biggr).$ Let us denote $r \triangleq |\overline{E}|$, and also denote $\overline{E}$ as $\{e_1,e_2,e_3,\ldots,e_r\}$, where $e_i$ denotes the $i$-{th} edge in the set when enumerated in some fashion. Since the probability of union of multiple events can be written using inclusion-exclusion principle, we have 
 \begin{eqnarray*}
     P\left( \bigcup_{1 \leq i \leq r}  e_i\right) = \sum_{1 \leq {i_1} \leq r} P\left( e_{i_1}\right)
- \sum_{1 \leq {i_1} < {i_2} \leq r} P\left( e_{i_1} \cap e_{i_2}\right)
 + \sum_{1 \leq {i_1} < {i_2} < {i_3} \leq r}
P\left( e_{i_1} \cap e_{i_2}\cap e_{i_3}\right) \\
- \ldots
+ (-1)^{r+1} P\left( \bigcap_{i=1}^r e_i \right).
 \end{eqnarray*}
In the expression for $P\left( \bigcup_{1 \leq i \leq r}  e_i\right)$, we notice that $P \left( e_1 \cap e_2 \cap e_3 \cap ... e_z\right)= \prod_{i=1}^{z}P(e_i),$ for any $2 \leq z \leq |\overline{E}|$, and this is because the generation of the index values in the Bloom filter follow an identical and statistically independent process. As a result, the exact expression for $P\left( \bigcup_{1 \leq i \leq r}  e_i\right)$ can be computed by computing $P \left( e_i \right)$ for $1 \leq i \leq r$. Towards that direction, in the rest of the proof, we compute the expression for $P(E_{x \leftrightarrow y}) = P(E_{x \rightarrow y}\land E_{y \rightarrow x})$. Again, since the embedding process at each node is statistically independent, we can write $P(E_{x \leftrightarrow y})= P(E_{x \rightarrow y})P( E_{y \rightarrow x}),$ and therefore, we only focus on the expression for $P(E_{x \rightarrow y})$. From first principles, $P(E_{x \rightarrow y})$ is given by
\begin{equation} \label{eqn:5}
P(E_{x \rightarrow y})=\sum_{i=1}^{min(m_x,k_x\gamma_x)} P(E_{x \rightarrow y}| S_i)P(S_i), 
\end{equation}
where $S_i$ is the event that exactly $i$ bits of the Bloom filter are set in the packet sent by node $x$, and $P(E_{x \rightarrow y}| S_i)$ is the conditional probability that the $k_{x}$ locations chosen for the edge $(x, y)$ coincides with the $i$ locations of the Bloom filter. Towards computing $P(S_{i})$, we need to compute the following attributes: (i) The total number of ways in which $k_{x}\gamma_x$ index positions can be set out of $m_{x}$ distinct positions, which is given by $A = m_x^{k_x\gamma_x}$, (ii) The total number of ways in which we can choose $i$ distinct positions out of $m_x$ positions, given by $B =$ $ {m_x}\choose{i}$, and finally, (iii) The total number of ways in which we need to select $k_x\gamma_x$ indices out of $i$ given indices such that each index in $i$ is selected at least once, which in turn can be solved as $C = S(\gamma_x k_{x},i)(i!),$ where $S(.,.)$ is the Stirling number of the second kind, defined as
    $$S(k,n)=\frac{1}{n!}\sum_{i=0}^{n}(-1)^{i} {{n}\choose{i}}(n-i)^k. \vspace{-0.2cm}$$ Thus, the overall expression for $P(S_{i})$ can be written as $\frac{BC}{A}$, expanded as,
    
 \begin{equation} \label{eqn:6}
     P(S_i)=\frac{{m_x \choose i}\sum_{j=0}^{i}(-1)^{j} {{i}\choose{j}}(i-j)^{\gamma_x k_x}}{m_x^{\gamma_x k_x}}.
 \end{equation}
 On the similar lines, the expression for $P(E_{x \rightarrow y}| S_{i})$ is
 \begin{equation} \label{eqn:7}
 P(E_{x \rightarrow y}| S_{i})={ \left( \frac{i}{m_x} \right)} ^ {k_x}, 
 \end{equation}
 which captures the probability that the hash function outputs for the edge $(x, y)$ pick up the $i$ bits that are already set in the Bloom filter. Substituting \eqref{eqn:6} and \eqref{eqn:7} in \eqref{eqn:5}, we obtain
 \begin{equation*}
      P(E_{x \rightarrow y})= \sum_{i=1}^{min(k_x \gamma_x,m_x)}\frac{i^{k_x}{ {m_x} \choose i}\sum_{j=0}^{i}(-1)^{j} {{i}\choose{j}}(i-j)^{\gamma_x k_x}}{m_x^{(\gamma_x+1)k_x}}.
 \end{equation*}
 Once the above type of expressions are computed for each edge in $\overline{E}$, we can compute $P\left( \bigcup_{1 \leq i \leq r} e_i\right)$ in closed-form. 
 \end{proof1}
 
\begin{proof1}
\label{prf:SSMP_bound}
[\textbf{Proof for Theorem \ref{thm:SSMP_bound}}] It is well known that the FPRs can be upper bounded using the union bound as 
 \begin{equation*}
     P\biggl(\bigcup_{(x,y) \in \overline{E}} E_{x \leftrightarrow y}\biggr) < \sum_{(x,y) \in \overline{E}} P\left( E_{x \leftrightarrow y} \right),
 \end{equation*}
 where $E_{x \leftrightarrow y}$ is the event when the edge $(x, y)$ is recovered in the topology learning phase. We know that $P(E_{x \rightarrow y})$ is a function of $m_x,\gamma_x$ and $k_x$, and likewise, $P(E_{y \rightarrow x})$ is a function of $m_y,\gamma_y$ and $k_y$. Therefore, writing $P(E_{x \rightarrow y})=f_{fp}(m_x,\gamma_x,k_x)$, we rewrite the union bound as
 \begin{equation}
 \label{eq:semi_upper_bound}
     \sum_{(x,y) \in \overline{E}} f_{fp}(m_x,\gamma_x,k_x) f_{fp}(m_y,\gamma_y,k_y).
 \end{equation}
 Out of the $|\overline{E}|$ terms in the above expression, we know that $f_{fp}(m_x,\gamma_x,k_x)$, which is the term corresponding to node $x$, appears $n-\gamma_{x}-2$ times. However, since the topology information is not known, we do not know its counterpart terms, which are of the form $f_{fp}(m_y,\gamma_y,k_y)$, i.e., the nodes that would be connected to node $x$ in the complementary graph. To circumvent this problem, we will proceed to compute an upper bound on \eqref{eq:semi_upper_bound} by assuming that in the complementary graph, node $x$ is connected to those nodes that have a large number of neighbours. In other words, we will artificially connect the false-positive edges to those $n-\gamma_x -2$ nodes which are most likely to occur. Formally, let us sort the $n-1$ nodes as $N_{sorted}=\{s1, s2, s3, s4, \ldots, s(n-1)\}$, wherein the sorting is done based on the evaluation of $f_{fp}$ on the parameters of each node as $f_{fp}(m_{si},\gamma_{si},k_{si}) \leq f_{fp}(m_{sj},\gamma_{sj},k_{sj})$ for $i < j$. Note that this is possible since the sets $\{\gamma_1,\gamma_{2},\ldots,\gamma_{n-1}\}$, $\{m_1,m_2,m_3, \ldots, m_{n-1}\}$, and $\{k_1,k_2,k_3, \ldots, k_{n-1}\}$ are fixed. For each node $sx$ in the sorted list, we pick the last $n-\gamma_{sx}-2$ distinct nodes (excluding node $sx$) of $N_{sorted}$, and use the corresponding $f_{fp}$ values when computing the counterparts of node $sx$ in \eqref{eq:semi_upper_bound}. By denoting this set of $n-\gamma_{sx}-2$ nodes by $S_{x}$, an upper bound on the FPRs can be written as
\begin{equation}
\label{eq:upper_bound}
P'(fp)=\sum_{x \in N_{sorted}} \sum_{y \in S_x} f_{fp}(m_{x},\gamma_{x},k_{x})f_{fp}(m_{y},\gamma_{y},k_{y}).
\end{equation}
This completes the proof. 
\end{proof1} 

     \begin{proof1}
     \label{prf:MSSP}
     [\textbf{Proof for Theorem \ref{thm:MSSP}}] Similar to the proof on FPRs of the SSMP protocol, we consider the set $\overline{E}$, which comprises the set of edges not present in the actual topology, excluding the complementary edges connected to the destination. Since the total number of edges in the topology is $|E|$, the total number of edges embedded on the packet will be $2|E| - \gamma_{RSU}$, where the factor $2$ captures the fact that a given edge is embedded twice by both its vertices, and $\gamma_{RSU}$, which represents the number of neighbours of the destination, is discounted as the RSU does not embed its neighbours in the Bloom filter. From first principles, the FPR of the MSSP protocol is given by $P(fp)=\sum_{i=1}^{|\overline{E}|}P(fp,i),$ where $P(fp,i)$ denotes the probability of the false-positive event wherein $i$ non-existing edges, for $1 \leq i \leq |\overline{E}|$, are recovered in the topology learning process. Furthermore, since the false-positive event also depends on the number of bits already set in the Bloom filter, we can write 
     \begin{equation*}
        P(fp,i)=\sum_{j=1}^{min(m,(2|E|-\gamma_{\bl{RSU}})k)}P((fp,i) | S_j )P(S_j),  
    \end{equation*}
    where $P(S_j)$, given by
    \begin{equation*}
     P(S_j)=\frac{{m \choose j}\sum_{t=0}^{j}(-1)^{t} {{j}\choose{t}}(j-t)^{\gamma k}}{m^{\gamma k}},
 \end{equation*}
   denotes the probability that $j$ bits are set in the Bloom filter such that $\gamma = 2|E|-\gamma_{RSU}$, and $P((fp,i) | S_j )$ denotes the probability that $i$ non-existing edges appear in the Bloom filter conditioned on the event $S_{j}$. Given that the false-positive events of each non-existing edge are statistically independent, we can write $P((fp,i) | S_j )$ using binomial expansion as 
     \begin{equation*}
P((fp,i) | S_j )={{|\overline{E}|}\choose{i}}\left(\delta\right)^i\left(1-\delta\right)^{|\overline{E}|-i},
\end{equation*}
where $\delta=\left( \frac{j}{m} \right)^{2k}$ is the probability that the $2k$ index values chosen by both the vertices of a non-existing edge lies on the $j$ indices already set in the Bloom filter. By plugging all the derived equations, the overall FPR is given by 

\begin{small}
\begin{equation*}
P(fp)=\sum_{i=1}^{|\overline{E}|}\sum_{j=1}^{min(m,(2|E|-\gamma_{RSU})k)}{{|\overline{E}|}\choose{i}}\left(\delta\right)^i\left(1-\delta\right)^{|\overline{E}|-i}P(S_j).
\end{equation*}
\end{small}
\end{proof1}

\begin{proof1}
\label{prf:security_imp_attack}
[\textbf{Proof for Proposition \ref{prop:security_imp_attack}}] \bl{Suppose that node $d$, for some $d \in N$, attempts to add the edge $(x, y)$, for some $x \neq d, y$, in the Bloom filter. Since node $d$ does not have the identity of the edge $(x, y)$ (since it is private and unclonable), it attempts to randomly generate $k$ statistically independent index values in the Bloom filter with uniform distribution. In such a case, the probability of success of impersonation attack is the probability that the index values generated by node $d$ coincides with that of node $x$ when it would embed the edge $(x, y)$. We observe that the success-rate of impersonation attack depends on the number of index values chosen by the other edges in the network. Formally, the success-rate is 
\begin{equation}
\label{eq:success-rate_imp_attack}
p_{succ} = \sum_{i = 1}^{min(m,(2|E|-\gamma_{RSU})k)} \mbox{Pr}(C_{i}) \sum_{j = 0}^{k} \binom {m - i}{j} (p_{match, j})^{2},
\end{equation}
where $\mbox{Pr}(C_{i})$ is the probability that $i$ positions, for $1 \leq i \leq min(m,(2|E|-\gamma_{RSU} )k)$, of the Bloom filter are chosen by all the legitimate edges, the term $p_{match, j}$ is the probability that node $d$ chooses $k$ index values in the Bloom filter such that $j$ distinct index values, for $0 \leq j \leq k$, are chosen outside the set of $i$ index values (which are already chosen by the other edges) and the remaining $k - j$ index values are chosen at any of those $i + j$ index values of the Bloom filter. Note that the term $p^{2}_{match, j}$ appears in \eqref{eq:success-rate_imp_attack} due to statistical independence between the index values chosen by node $d$ and node $x$. We also show that $p_{match, j}$ is lower bounded by $\frac{\binom {k}{j}((j!)(i^{(k-j)}))}{m^{k}}$ using the inclusion-exclusion principle  Therefore, $p_{succ}$ can be lower bounded by
\begin{equation}
\label{eq:success-rate_impersonation}
\sum_{i} \mbox{Pr}(C_{i}) \sum_{j = 0}^{k} \binom {m - i}{j} \left(\frac{\binom {k}{j}((j!)(i^{(k-j)}))}{m^{k}}\right)^{2},
\end{equation}
where the range of $i$ is same as in \eqref{eq:success-rate_imp_attack}.
}
\end{proof1}

\begin{proof1}
\label{prf_DEE_PFA}
[\textbf{Proof for Theorem \ref{thm_DEE_PFA}}] Let $G(N,E)$ be such that there exists a total of $\lambda$ distinct paths of hop-length $h$ from a given source to the destination. Since the destination is capable of verifying at most $\beta$ paths, a false-positive event can occur when more than $\beta$ paths are recovered from the Bloom filter. Conditioned on a given path travelled by the packet, let the corresponding set of edges (or double-edges) be represented by $E_{actual}$ (or $DE_{actual}$). Excluding the path chosen by the packet, a false-positive event can occur if at least $\beta$ paths out of the remaining $\lambda -1$ paths appear in the Bloom filter. To count such events, there are ${\lambda-1}\choose{\beta}$ distinct ways denoted by $\{C_1,C_2,\ldots,C_{{\lambda-1}\choose{\beta}}\}$, and for each combination the $\beta$ distinct paths are represented by $\{P_{i1},P_{i2},P_{i3},\ldots,P_{i\beta}\}$,
for $1 \leq i \leq$ $ {\lambda-1}\choose{\beta}$. With this, using the union bound, we can write $P_{\mathcal{W}}(fp) < \sum_{i = 1}^{{\lambda-1}\choose{\beta}} P(X[C_{i}])$, where $\mathcal{W} \in \{\mathcal{E}, \mathcal{DE}\}$, $P(X[C_{i}])$ is the probability of occurrence of $\beta$ paths in $C_{i}$, captured by the event $X[C_{i}]$. Towards computing $P(X[C_{i}])$, we have ${X[C_{i}]=\bigcap_{j=1}^{\beta} X[P_{ij}]}$, where $X[P_{ij}]$ denotes the event wherein the edges/double-edges of the path $P_{ij} = n^{(1)}_{ij} \rightarrow n^{(2)}_{ij} \rightarrow \ldots n^{(h-1)}_{ij} \rightarrow n^{(h)}_{ij}$ are recovered from the Bloom filter. The corresponding set of edges and double-edges are $\{(n^{(1)}_{ij},n^{(2)}_{ij}),(n^{(2)}_{ij},n^{(3)}_{ij}), \ldots \}$ and $\{(n^{(1)}_{ij},n^{(2)}_{ij}, n^{(3)}_{ij}),(n^{(3)}_{ij},n^{(4)}_{ij}, n^{(5)}_{ij}), \ldots \}$. With such edges/double-edges, there are two possibilities: \textbf{Case 1}: Some edges/double-edges of $P_{ij}$ may belong to the path travelled by the packet, and as a result, those edges/double-edges will always be recovered with probability one. \textbf{Case 2}: Some edge/double-edges of $P_{ij}$ do not belong to the path travelled by the packet, and as a result, the probability of recovering such an edge/double-edge is
 \begin{equation*}
p = \sum_{i=1}^{min(k \theta,m)}\frac{i^{k}{ {m} \choose i}\sum_{j=0}^{i}(-1)^{j} {{i}\choose{j}}(i-j)^{\theta k}}{m^{(\theta+1)k}},
\end{equation*}
wherein we substitute $\theta = h$ and $\theta = \lfloor{\frac{h}{2}}\rfloor$ for the DE and DDE methods, respectively.
With the DE method, let $E_{ij}$ represent the set of distinct edges of the path $P_{ij}$ that are not present in $E_{actual}$. Similarly, with the DDE method, let $DE_{ij}$ represent the set of distinct double-edges of the path $P_{ij}$ that are not present in $DE_{actual}$. Thus, by considering all such edges/double-edges under the $X[C_{i}]$, we get $E_i=\bigcup_{j=1}^{\beta}E_{ij}
$ and $DE_i=\bigcup_{j=1}^{\beta}DE_{ij}$. Since the event of hash-collision of each edge/double-edge are identical and statistically independent, we can write $P(X[C_i])=p^{|E_i|}$ and $P(X[C_i])=p^{|DE_i|}$ for the DE and the DDE method, respectively. Hence, an upper bound on the false positive rate for the DE method is given by $\overline{P}_{\mathcal{E}}(fp)= \sum_{i=1}^{{\lambda-1}\choose{\beta}}P(X[C_i])$,
which is equal to $\sum_{i=1}^{{\lambda-1}\choose{\beta}}p^{|E_i|}
$. Similarly, in the case of DDE method, an upper bound on the false positive rate is given by
$\overline{P}_{\mathcal{DE}}(fp)=\sum_{i=1}^{{\lambda-1}\choose{\beta}}p^{|DE_i|}
$. This completes the proof.
\end{proof1}

\bibliographystyle{elsarticle-num}
\bibliography{refs}




\end{document}